\edef\switcht@albion{%
  \relax\unexpanded\expandafter{\switcht@albion}%
}
\xpatchcmd*{\switcht@albion}{ \def}{\def}{}{}
\xpatchcmd{\switcht@albion}{\relax}{}{}{}
\edef\switcht@deutsch{%
  \relax\unexpanded\expandafter{\switcht@deutsch}%
}
\xpatchcmd*{\switcht@deutsch}{ \def}{\def}{}{}
\xpatchcmd{\switcht@deutsch}{\relax}{}{}{}
\edef\switcht@francais{%
  \relax\unexpanded\expandafter{\switcht@francais}%
}
\xpatchcmd*{\switcht@francais}{ \def}{\def}{}{}
\xpatchcmd{\switcht@francais}{\relax}{}{}{}
\g@addto@macro{\UrlBreaks}{\UrlOrds}
\definecolor{eclipseStrings}{RGB}{42,0.0,255}
\definecolor{eclipseKeywords}{RGB}{127,0,85}
\colorlet{numb}{magenta!60!black}
\lstdefinelanguage{json}{
    basicstyle=\normalfont\ttfamily,
    commentstyle=\color{eclipseStrings}, % style of comment
    stringstyle=\color{eclipseKeywords}, % style of strings
    numbers=left,
    numberstyle=\scriptsize,
    stepnumber=1,
    numbersep=8pt,
    showstringspaces=false,
    breaklines=true,
    frame=lines,
    % backgroundcolor=\color{gray}, %only if you like
    string=[s]{"}{"},
    comment=[l]{:\ "},
    morecomment=[l]{:"},
    literate=
        *{0}{{{\color{numb}0}}}{1}
         {1}{{{\color{numb}1}}}{1}
         {2}{{{\color{numb}2}}}{1}
         {3}{{{\color{numb}3}}}{1}
         {4}{{{\color{numb}4}}}{1}
         {5}{{{\color{numb}5}}}{1}
         {6}{{{\color{numb}6}}}{1}
         {7}{{{\color{numb}7}}}{1}
         {8}{{{\color{numb}8}}}{1}
         {9}{{{\color{numb}9}}}{1}
}
\patchcmd{\NAT@test}{\else \NAT@nm}{\else \NAT@hyper@{\NAT@nm}}{}{}
\crefname{section}{Sect.}{Sect.}
\Crefname{section}{Section}{Sections}
\crefname{listing}{List.}{List.}
\crefname{listing}{Listing}{Listings}
\Crefname{listing}{Listing}{Listings}
\crefname{lstlisting}{Listing}{Listings}
\Crefname{lstlisting}{Listing}{Listings}
\DeclareFontFamily{U}{MnSymbolC}{}
\DeclareSymbolFont{MnSyC}{U}{MnSymbolC}{m}{n}
\DeclareFontShape{U}{MnSymbolC}{m}{n}{
  <-6>    MnSymbolC5
  <6-7>   MnSymbolC6
  <7-8>   MnSymbolC7
  <8-9>   MnSymbolC8
  <9-10>  MnSymbolC9
  <10-12> MnSymbolC10
  <12->   MnSymbolC12%
}{}
\DeclareMathSymbol{\powerset}{\mathord}{MnSyC}{180}
\newcommand{\hydash}{\penalty\@M-\hskip\z@skip}
\newcommand{\F}{\mathbb{F}}
\newcommand{\one}{\mathbb{1}}
\newcommand{\zero}{\mathbb{0}}
\newcommand{\eps}{\epsilon}
\newcommand{\sg}{\sigma}
\newcommand{\lb}{\lambda}
\DeclareMathOperator{\gl}{GL}
\DeclareMathOperator{\GL}{GL}
\DeclareMathOperator{\agl}{AGL}
\DeclareMathOperator{\AGL}{AGL}
\DeclareMathOperator{\sym}{Sym}
\DeclareMathOperator{\spn}{Span}
\newcommand{\defeq}{\vcentcolon=}
\newcommand{\ind}[1]{\mathds{1}_{#1}}
\renewcommand{\b}{\textbf{b}}
\newcommand{\Wo}{W_\circ}
\newcommand{\wo}{W_\circ}
\newcommand{\Uo}{U_\circ}
\newcommand{\uo}{U_\circ}
\newcommand{\Ho}{H_\circ}
\newcommand{\ho}{H_\circ}
\newcommand{\subs}{\subset}
\newcommand{\Sum}{\sum\limits}
\newcommand{\hex}[1]{\texttt{#1}$_\text{x}$}
\begin{document}

\title{Optimal s-boxes against alternative operations and linear propagation}

%
%\titlerunning{Abbreviated paper title}
% If the paper title is too long for the running head, you can set
% an abbreviated paper title here
%
\author{Marco Calderini\orcidlink{0000-0002-6817-3421}\inst{1} \and
Roberto Civino\orcidlink{0000-0003-3672-8485}\inst{2} \and
Riccardo Invernizzi\orcidlink{0000-0002-2271-6822}\inst{3}}

\authorrunning{M. Calderini et al.}
% First names are abbreviated in the running head.
% If there are more than two authors, 'et al.' is used.
%
\institute{University of Trento, \email{marco.calderini@unitn.it} \and
University of L'Aquila, 
\email{roberto.civino@univaq.it} \and
COSIC, KU Leuven, \email{riccardo.invernizzi@esat.kuleuven.be}}

\maketitle

\begin{abstract}
Civino et al.~(2019) have shown how some diffusion layers can expose a Substitution-Permutation Network
to vulnerability from differential cryptanalysis when employing alternative operations coming from groups isomorphic to the translation group on the message space. In this study, we present a classification of diffusion layers that exhibit linearity with respect to certain \emph{parallel} alternative operations, enabling the possibility of an alternative differential attack simultaneously targeting all the s-boxes within the block.
Furthermore, we investigate the differential behaviour with respect to alternative operations for all classes of optimal 4-bit s-boxes, as defined by Leander and Poschmann (2007). Our examination reveals that certain classes contain weak permutations w.r.t.\ alternative differential attacks. 
Finally, we leverage these vulnerabilities to execute a series of experiments showing the effectiveness of the cryptanalysis performed with a parallel alternative operation compared to the classical one.\\

\textbf{AMS classification 2020}: 20B35, 94A60,68P25.
\keywords{Differential cryptanalysis \and Alternative operations \and 4-bit s-boxes \and Commutative cryptanalysis}

\end{abstract}

\section{Introduction and preliminaries}
Differential cryptanalysis, originally introduced by Biham and Shamir in the late 1980s~\cite{biham1991differential} and subsequently generalised~\cite{W99,K94,BCJW02,BBS99}, has become one of the cornerstones for evaluating the robustness of various symmetric primitives. The fundamental premise of differential cryptanalysis is that analysing the differences (differentials) between pairs of plaintexts and the corresponding ciphertexts can unveil undesired biases.
While  differentials can be calculated with respect to any difference operator, regardless of which operation is responsible for performing the sum with the round key during encryption, it is usual for the two operations to coincide.
For this reason, classical differential cryptanalysis of a cipher in which the key is xor-ed to the state is typically performed by studying the distribution of xor-differentials,
whose propagation is traditionally prevented by the combined action of the linear diffusion layer and the s-box layer. In particular,
s-boxes are pivotal for ensuring the security of almost all contemporary block ciphers, serving as the primary nonlinear component within the cipher, particularly in the case of Substitution-Permutation Networks (SPNs). Equally relevant, the efficiency of a cipher is significantly influenced by the size of the s-boxes. In practical scenarios, s-boxes typically have a size of 4 or 8 bits, with 4 being the most popular choice for ciphers designed to operate on power-constrained devices~\cite{biham1998serpent,bogdanov2007present,shibutani2011piccolo,banik2015midori}. It is clear that the selection of appropriate s-boxes is critical to fortify the cipher against various types of attacks. In this sense, Leander and Poschmann have classified 4-bit s-boxes which are optimal w.r.t.\ standard criteria that guarantee poor propagation of xor-differentials~\cite{leander2007classification}.

A recent line of research is focused on the study of alternative operations for the differential cryptanalysis of xor-based ciphers~\cite{civino2019differential,calderini2021properties,tecseleanu2022security,nexexp,baudrin2023commutative}. These new operations define new differentials to be analysed. Within this approach, a large class of possible alternative operations has been studied, all of which have in common that they are induced by a group of translations isomorphic to the group of translations acting on the message space by means of the xor addition with the key.
In the context of an SPN, where the encrypted message is generated by iterating through a sequence of s-box layers, (xor)-linear diffusion, and xor-based key addition layers, altering the differential operator yields a dual impact. On one hand, it is highly probable that differentials cross the s-box layer more effectively, given that its nonlinearity is maximised with respect to xor. On the other hand, differentials do not deterministically propagate through the diffusion layer, as observed in classical scenarios. This pivotal limitation effectively restricts the success of the attack only to cases where the target layer is linear not only concerning xor but also with respect to the operation under consideration for computing differentials.
Moreover only a portion of round keys, called \emph{weak}, will let alternative differentials propagate deterministically through the xor-based key-addition layer, while all others  will require a probabilistic analysis.

\subsubsection*{Related works.}
The approach of differential cryptanalysis with alternative operations can be framed within the broader context of commutative cryptanalysis, introduced by Wagner~\cite{wagner2004towards}, of which differential cryptanalysis and some of its variations can be seen as particular cases.\\

A first successful attempt based on the study of the alternative differential properties of an xor-based toy cipher of the SPN family has shown that it is possible to highlight a bias in the distribution of the differences calculated compared to an alternative operation  which is instead not detectable by means of the standard xor-differential-based approach~\cite{civino2019differential}. The target cipher featured five 3-bit s-boxes and the operation used to perform the attack acted as the xor on the last four s-boxes, while on the first one matched with one of
the alternative sums defined by Calderini et al.~\cite{calderini2021properties}, coming from another translation group. The advantage of employing an alternative operation in this case was only derived from the benefit induced by a single s-box.
In a more recent experimental approach~\cite{nexexp}, we showed that better results in a similar context can be obtained using an \emph{alternative parallel operation}, in which every s-box can be targeted. In this case, the diffusion layer of the cipher was determined through an algorithm, ensuring that it adheres to the constraint of linearity with respect to both xor and the target operation.

The experiments mentioned above rely on the assumption that the translation group $T_\circ$ defining the new operation $\circ$ consists of functions that are affine with respect to the classical xor operation, that is denoted in this paper simply by $+$. In other words, each translation of the new operation can be expressed as an affine transformation over the usual vector space structure defined by xor. We also assume the dual condition, namely that the classical translations are affine with respect to $\circ$, which ensures that the weak-key space has a controlled and predictable dimension.
A similar approach is used by Baudrin et al.~\cite{baudrin2023commutative} to find probability-one distinguishers in both a modified \texttt{Midori}~\cite{banik2015midori} with \emph{weakened} round constants and in \texttt{Scream}. 
Their analysis focuses on finding affine maps $A,B \in \AGL(+)$ such that for sufficiently many encryption function holds 
\[
E_k(A(x)) = B(E_k(x)),
\] 
where $A$ and $B$ are diagonal matrices with s-box-sized blocks. 

%Their analysis focuses on a specific case where the diffusion layer is represented by a diagonal matrix with s-box-sized blocks, allowing both xor-differentials and differential based on an alternative parallel operation to propagate simultaneously. However, this assumption is not exhaustive, as the diffusion layer could be structured differently while still achieving the same effect. This highlights that the study of the group of maps which are linear w.r.t.\ two operations of the mentioned type is of interest both for its algebraic properties and its applications to cryptanalysis.

\subsubsection*{Our contribution.}
This paper establishes a general result for SPNs, characterising xor-linear maps that are simultaneously linear with respect to a fixed parallel alternative operation of affine type (Sec.~\ref{sec:aut}). Specifically, we consider a set of affine maps ${\tau_a} \subset \AGL(+)$, represented by diagonal matrices with s-box-sized blocks, where $a$ ranges over the message space, and forming a translation group under an operation $\circ$. We derive conditions on the cipher’s diffusion layer to ensure that, for each $a$ and $b$,
\[
 E_k(\tau_a(x)) = \tau_b(E_k(x)),
\]  
which implies that
\[
E_k(x)\circ E_k(x\circ a) = b
\]
with high probability for a set of sufficiently many keys, i.e.\ weak keys.
This finding enables the execution of a differential attack wherein each s-box affected by a nontrivial differential contributes to the final differential probability with increased efficacy compared to the conventional xor differentials. Additionally, differentials propagate deterministically through the linear layer in this scenario.
Moreover, we examine all possible alternative operations on 4 bits and investigate the differential properties of the optimal 4-bit s-boxes classes, as defined in the work of Leander and Poschmann~\cite{leander2007classification} (a comparable methodology, albeit in the context of modular addition, was recently employed by~\citet{zajac2020cryptographic}). Our analysis shows that each class comprises potentially weak permutations (Sec.~\ref{sec:sboxes}). When coupled with a diffusion layer as described earlier, these permutations have the potential to render the cipher susceptible to differential attacks with alternative operations. To substantiate our findings, we conclude the paper by presenting experimental results on a family of toy SPNs (Sec.~\ref{sec:experiments}).

\subsection{Notation}
Let $V$ be an $n$-dimensional vector space over $\F_2$ which represents the message space. We write $V = V_1 \oplus V_2 \oplus \cdots \oplus V_b$, where each $V_j$ is isomorphic to a vector space $B$ such that $\dim(B) = s$ on which every s-box acts. Therefore we have $n=sb$. We denote by $\{e_i\}_{i=1}^n$ the canonical basis of $V$.
If $G$ is any finite group acting on $V$, for each $g \in G$ and $v \in V$ we denote the action of $g$ on $v$ as $vg$, i.e.\ we use postfix notation for every function evaluation. We denote by $\sym(V)$ the symmetric group acting on $V$, i.e.\ the group of all permutations on the message space, by $\GL(V,+)$ the group of linear transformations, and by $\AGL(V,+)$ the group of affine permutations.  
The identity matrix of size $l$ is denoted by $\one_l$ and the zero matrix of size $l \times h$ is denoted by $\zero_{l,h}$, or simply $\zero_l$ if $l=h$. 
We finally denote by $T_+$ the group of translations on $V$, i.e. \[T_+ \defeq \{ \sg_a:  x \mapsto x+a \mid a \in V\} < \sym(V).\] We remind that the translation $\sg_k$ acts on a vector $x$ in the same way the key-addition layer of an SPN acts xor-ing the round key $k$ to the message $x$, i.e. $x\sg_k = x + k$.   

\subsection{Preliminaries on alternative operations}
\label{sec:alt_op}

An alternative operation on $V$ can be defined given any 2-elementary abelian regular subgroup $T < \AGL(V, +)$, that we can write as $T = \{ \tau_a \mid a \in V \}$, where $\tau_a$ is the unique element in $T$ which maps $0$ into $a$. Consequently, for all $a, b \in \, $V, we can define $a \circ b \defeq a \tau_b$, resulting in $(V, \circ)$ forming an additive group. The operation $\circ$ induces a vector space structure on $V$, with the corresponding group of translation being $T_\circ = T$. Additionally, for each $a \in V$, there exists $M_a \in \GL(V,+)$ such that $\tau_a = M_a \sg_a$, meaning that for every $x \in V$, \[x \circ a = x \tau_a = x M_a + a.\] It is also assumed throughout that $T_+ < \AGL(V,\circ)$, where $\AGL(V,\circ)$ is the normaliser in $\sym(V)$ of $T_\circ$ (i.e.\ the group of affine permutations w.r.t.\ $\circ$). This crucial technical assumption renders the key-addition layer an affine operator concerning the new operation, enabling the prediction of how the key addition affects the differentials with a reasonable probability. Further details on this aspect, which may not be directly relevant to the scope of the current paper,  can be found in Civino et al.~\cite{civino2019differential}. 
    In this context, we define the \emph{weak keys subspace} as 
    $$W_\circ \defeq \{a \mid a \in V, \sg_a = \tau_a \} = \{ k \mid k \in V, \ \forall x \in V \ x \circ k = x + k \}.$$
$W_\circ$ is a vector subspace of both $(V, +)$ and $(V, \circ)$. 
%It is worth noticing that each $k \in W_\circ$ represents a key that can be added regardlss of the operation involved. In this sense, we address those vectors as weak keys. 
It is known~\cite{caranti2005abelian,calderini2021properties} that $W_\circ$ is nonempty and that
\begin{equation}\label{eq:bound}
2 - (n \bmod 2) \leq \dim(W_\circ) \leq n-2.
\end{equation}
Moreover, up to conjugation, we can always assume $W_\circ$ to be the span of the last $d$ canonical vectors of $V$~\cite{calderini2021properties}. This allows to represent the new sum in a canonical way: for each $a \in V$ there exists a matrix $E_a \in \F_2^{(n-d) \times d}$ such that
    \begin{equation}\label{eq:form}
    M_a = 
    \begin{pmatrix}
    \ind{n-d} & E_a \\
    \mathbb{0}_{d, n-d} & \ind{d}
    \end{pmatrix}.
    \end{equation}
Fixing such an operation as above is therefore equivalent to defining the matrices
\[ 
M_{e_i} =
\begin{pmatrix}
\ind{n-d} & E_{e_i} \\
\mathbb{0}_{d, n-d} & \ind{d}
\end{pmatrix}
=
\left(\begin{array}{@{}c|c@{}}
   	\ind{n-d} & 
   	\begin{matrix}
   	\b_{i,1} \\
   	\vdots \\
   	\b_{i, n-d}
   	\end{matrix} \\
   	\hline
   	\mathbb{0}_{d, n-d} & \ind{d}
\end{array}\right)
\]
for $1 \leq i \leq n$, where $\b_{i,j} \in \F_2^d$. The assumptions on $T_\circ$ and on $W_\circ$ imply that $E_{e_i} = 0$ for $n-d+1 \leq i \leq n$, $\b_{i,i} = \textbf{0}$ and $\b_{i,j} = \b_{j,i}$. In conclusion, the following result characterises the criteria that the vectors $\b_{i,j}$ must adhere to in order to define an alternative operation as previously described.
\begin{theorem}[\cite{civino2019differential}]
	\label{thm:theta}
	Let $T_\circ < \AGL(V, +)$ be 2-elementary, abelian, and regular, and let $d \leq n-2$. The operation $\circ$ induced by $T_\circ$ is such that $d = \dim(\wo)$, $T_+ < \agl(V, \circ)$, and $W_\circ = \spn\{e_{n-d+1},\dots,e_n \}$ 
	if and only if the  matrix $\Theta_\circ \in (\F_{2^d})^{(n-d)\times (n-d)}$ defined as 
	\[
	\Theta_\circ \defeq 
	\begin{pmatrix}
		\mathbf{b}_{1,1} & \mathbf b_{1,2} & \cdots & \mathbf b_{n-d, 1} \\
		\mathbf b_{2,1} & \mathbf b_{2,2} & \cdots & \mathbf b_{n-d, 2} \\
		\vdots & \vdots & \ddots & \vdots \\
		\mathbf b_{n-d, 1} & \mathbf b_{n-d, 2} & \cdots & \mathbf b_{n-d, d}
	\end{pmatrix}
	\]
	is zero-diagonal, symmetric and no non-trivial $\mathbb F_2$-linear combination of its columns is the zero vector.
	The matrix $\Theta_\circ$ is also called the \emph{defining matrix} for the operation $\circ$.
\end{theorem}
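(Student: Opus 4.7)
The plan is to prove the two implications separately by translating each algebraic property of $T_\circ$ into a linear-algebraic property of the defining matrix $\Theta_\circ$.

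For the forward direction, much of the work is already done in the discussion preceding the theorem: under the stated hypotheses every $M_{e_i}$ has the block-unipotent form (\ref{eq:form}) with $E_{e_i}=0$ for $i>n-d$, and the remaining free entries $\mathbf{b}_{i,j}\in\F_2^d$ satisfy $\mathbf{b}_{i,i}=\mathbf{0}$ and $\mathbf{b}_{i,j}=\mathbf{b}_{j,i}$. This already accounts for the $(n-d)\times(n-d)$ shape, the zero diagonal, and the symmetry of $\Theta_\circ$. What remains is the linear independence of its columns. To obtain it I would expand a general $\tau_a$ in terms of the generators $\tau_{e_i}$: writing $N_{e_i}\defeq M_{e_i}-\one$ and observing that, thanks to the strictly block-upper-triangular shape in (\ref{eq:form}), one has $N_{e_i}N_{e_j}=0$ for all $i,j$, the product $\prod_{i:a_i=1}M_{e_i}$ collapses to $M_a=\one+\sum_{i:a_i=1}N_{e_i}$. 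Consequently $M_a=\one$ iff $\sum_{i:a_i=1}\mathbf{b}_{i,\cdot}=\mathbf{0}$, which is precisely an $\F_2$-linear relation among the columns of $\Theta_\circ$. Since $a\in W_\circ$ is equivalent to $\tau_a=\sg_a$, that is to $M_a=\one$, the hypothesis $W_\circ=\spn\{e_{n-d+1},\dots,e_n\}$ rules out any such relation supported in the first $n-d$ coordinates, giving the required column independence.

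For the converse, I would start from a matrix $\Theta_\circ$ satisfying the three conditions, define $M_{e_i}$ through (\ref{eq:form}) by filling the rows of $E_{e_i}$ with the $\mathbf{b}_{i,j}$ (and $M_{e_i}=\one$ for $i>n-d$), and then set $M_a\defeq\one+\sum_{i:a_i=1}N_{e_i}$. A short computation shows that $T_\circ\defeq\{M_a\sg_a\mid a\in V\}$ is closed under composition (the linear parts always commute thanks to $N_{e_i}N_{e_j}=0$), abelian (symmetry $\mathbf{b}_{i,j}=\mathbf{b}_{j,i}$ is precisely what is needed to make the translation parts commute as well), and consists of involutions (the zero diagonal together with symmetry force the quadratic correction $aM_a-a$ to collapse into paired off-diagonal terms that cancel in characteristic two). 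Regularity is then immediate because $|T_\circ|=2^n$ acts faithfully and transitively on $V$, and the identification $W_\circ=\spn\{e_{n-d+1},\dots,e_n\}$ is pinned down by the column-independence hypothesis via the same equivalence $M_a=\one \Leftrightarrow a\in W_\circ$ used above.

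The main obstacle is the verification of $T_+<\AGL(V,\circ)$ in the converse direction: I need to show that for every $k\in V$ and every $\tau_a\in T_\circ$ one has $\sg_k^{-1}\tau_a\sg_k\in T_\circ$. Unwinding this requires an explicit matrix computation of the conjugate, which in the unipotent form reduces to checking that the linear part remains $M_a$ while the translation part gets shifted to $a+kN_a$. The crucial observation is the containment $(M_a-\one)V=N_aV\subseteq W_\circ$, immediate from the block shape in (\ref{eq:form}); since the added vector $kN_a$ lies in $\spn\{e_{n-d+1},\dots,e_n\}$ and the matrices $N_{e_i}$ for $i>n-d$ vanish, one obtains $M_{a+kN_a}=M_a$, so $\sg_k^{-1}\tau_a\sg_k=\tau_{a'}$ for $a'=a+kN_a\in V$, which indeed belongs to $T_\circ$ and completes the proof.
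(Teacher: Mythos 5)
This theorem is imported verbatim from Civino et al.~\cite{civino2019differential}; the paper itself gives no proof, so there is no internal argument to compare against. Your reconstruction is correct and follows the same strategy as the cited source: writing $M_a=\one+N_a$ with $N_a$ strictly block-upper-triangular, using $N_{e_i}N_{e_j}=0$ to get $M_a=\one+\sum_{i:a_i=1}N_{e_i}$, and translating 2-elementarity, commutativity, $T_+<\AGL(V,\circ)$ and the identification of $W_\circ$ into, respectively, the zero diagonal, the symmetry, the stability of $M_a$ under shifts lying in $\spn\{e_{n-d+1},\dots,e_n\}$, and the column-independence of $\Theta_\circ$. The only point worth making explicit is that your forward direction borrows the block form~\eqref{eq:form} and the relations $\b_{i,i}=\mathbf{0}$, $\b_{i,j}=\b_{j,i}$ from the surrounding text; a self-contained proof would re-derive these from $U_\circ\subseteq W_\circ$ and from the involutivity and commutativity of the $\tau_{e_i}$, which is exactly the computation you carry out in the converse direction.
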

This formulation also gives us an efficient algorithm to compute $a \circ b$ \cite{civino2019differential}. \\
%\begin{example}
%    Let $n = 3$ and $d = 1$. In light of the previous observation, we have only one possible alternative operation on $(\F_2)^3$, namely the operation $\diamond$ defined by
%    \[ 
%    M_{e_1} = 
%    \left(\begin{array}{@{}cc|c@{}}
%        1 & 0 & 0 \\
%        0 & 1 & 1 \\
%        \hline
%        0 & 0 & 1
%    \end{array}\right), \quad
%    M_{e_2} = 
%    \left(\begin{array}{@{}cc|c@{}}
%        1 & 0 & 1 \\
%        0 & 1 & 0 \\
%        \hline
%        0 & 0 & 1
%    \end{array}\right), \quad
%    M_{e_3} = \ind{3}
%    \]
%    In \textcolor{red}{the table}, $+$ and the operation $\diamond$ are compared. Each vector is interpreted as a binary number, most significant bit first, and then represented in hexadecimal notation. As $W_\diamond = \{0, e_3\}= \{\mathhex{0}, \mathhex{1}\}$, the first two rows and columns of the table are equal. Different entries are emphasised. 
%\end{example}
Every operation $\circ$ induces a \emph{product} $\cdot$ defined as follows:
\[
\forall a, b \in V \quad a\cdot b \defeq a + b + a \circ b.
\]
The product $a \cdot b$ can be interpreted as the error committed when confusing $a \circ b$ with $a + b$. This brings us to the definition of \emph{error space} of an operation $\circ$ as
\[
    \Uo \defeq \{a \cdot b :  a, b \in V\}.
\]
Finally, we recall some key properties of this product, which turns $(V,+,\cdot)$ into an alternating algebra of nilpotency class two. A comprehensive account of the underlying results can be found in Civino et al.~\cite{civino2019differential}.
\begin{proposition}
	\label{prop:info_prod}
    Let $\circ$ be an operation as above and let $\cdot$ be the induced product. Then
    \begin{itemize}
        \item $\cdot$ is distributive w.r.t\ $+$, i.e.\ $(V, +, \cdot)$ is an $\F_2$-algebra;
        \item $\Uo \subs \wo$;
        \item for each $x,y \in V$ there exists $\eps_{x,y} \in \uo$ such that
        \[
        x + y = x \circ y + \eps_{x, y},
        \]
        with $\eps_{x, y} = x \cdot y = (0, \dots, 0, (x_1,\dots, x_{n-d})E_y)$;
        \item $\uo$ is composed of all possible vectors $w \in \wo$ whose last $d$ components are $\F_2$-linear combinations of the defining vectors $\b_{ij}$;
        \item $x \cdot y \cdot z = 0$, for all $x, y, z \in V$.
    \end{itemize}
\end{proposition}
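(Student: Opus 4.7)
The plan is to reduce all five assertions to the single identity $a \cdot b = a(\one + M_b)$, which follows immediately by unfolding the definition $a \cdot b = a + b + a\circ b$, substituting $x \circ a = xM_a + a$, and using characteristic two to cancel the two copies of $b$. Plugging in the canonical block form from \eqref{eq:form},
\[
\one + M_b = \begin{pmatrix} \zero_{n-d} & E_b \\ \zero_{d,n-d} & \zero_d \end{pmatrix},
\]
one obtains $a \cdot b = (\zero, a_1 E_b)$, where $a_1 \in \F_2^{n-d}$ collects the first $n-d$ components of $a$. From this, the third bullet (the explicit form of $\eps_{x,y}$) is immediate, and the second bullet $\Uo \subs \Wo$ is read off at once from the vanishing of the first $n-d$ coordinates of every product.

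The distributivity of $\cdot$ over $+$ splits into two halves. Left-distributivity $(a+b)\cdot c = a\cdot c + b\cdot c$ is obvious from the formula, since $a \mapsto a(\one + M_c)$ is $\F_2$-linear. Right-distributivity, on the other hand, reduces to showing that $b \mapsto E_b$ is itself $\F_2$-linear. The key is to exploit the group law $M_{b\circ c} = M_b M_c$ coming from the regular action of $T_\circ$: writing $M_b = \one + N_b$ with $N_b$ block-upper as above, a direct check gives $N_b N_c = \zero$, so that $(\one + N_b)(\one + N_c) = \one + N_b + N_c$, whence $E_{b \circ c} = E_b + E_c$. To upgrade this $\circ$-additivity to $+$-additivity, one invokes the identity $b + c = (b\circ c) \circ (b\cdot c)$ together with $b\cdot c \in \Wo$, which forces $M_{b \cdot c} = \one$ and hence $E_{b \cdot c} = \zero$; the chain then gives $E_{b+c} = E_b + E_c$ and closes right-distributivity.

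The nilpotency relation $x \cdot y \cdot z = 0$ is then a one-line computation in either bracketing. Under left-to-right association, $(x\cdot y)\cdot z$ equals $((x\cdot y)_1)\, E_z$ in the last $d$ coordinates, which vanishes because the top block of $x \cdot y$ is zero; alternatively, $y \cdot z \in \Wo$ forces $M_{y \cdot z} = \one$ and so $x \cdot (y\cdot z) = x(\one + \one) = 0$. Finally, the characterization of $\Uo$ is obtained by expanding, via the $+$-linearity of $E$ just established,
\[
a_1 E_b = \sum_{j=1}^{n-d} (a_1)_j \sum_{i=1}^{n} b_i \b_{i,j},
\]
which writes the last $d$ coordinates of any product $a\cdot b$ as an $\F_2$-combination of the $\b_{i,j}$'s; conversely, taking $a = e_j$ and $b = e_i$ with $j \le n-d$ produces the individual vector $(\zero, \b_{i,j})$, so that all the claimed generators are indeed realised inside $\Uo$.

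I expect the only genuinely delicate step to be the $\F_2$-linearity of $b \mapsto E_b$: this is not a formal consequence of $T_\circ$ being a 2-elementary abelian regular subgroup on its own, and has to be squeezed out by combining the group-homomorphism identity $M_{b \circ c} = M_b M_c$ with the nilpotent block structure dictated by the normalisation $\Wo = \spn\{e_{n-d+1},\dots,e_n\}$ and the weak-key vanishing $E|_{\Wo} = \zero$. Once this linearity is in hand, the remaining four bullets are essentially bookkeeping with the block form of $\one + M_b$.
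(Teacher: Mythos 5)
The paper does not actually prove this proposition itself --- it defers to Civino et al.~(2019) --- so there is no internal proof to compare against; your route via the identity $a\cdot b=a(\one+M_b)$, the group law $M_{b\circ c}=M_bM_c$, and the nilpotent decomposition $M_b=\one+N_b$ with $N_bN_c=\zero$ is exactly the standard argument from that source. Bullets one, two, three and five are handled correctly and completely. In particular, your derivation of the $+$-linearity of $b\mapsto E_b$ --- upgrading the $\circ$-additivity $E_{b\circ c}=E_b+E_c$ through $b+c=(b\circ c)\circ(b\cdot c)$ and $E_w=\zero$ for $w\in\Wo$ --- is the right way to get right-distributivity, and you are right to single this out as the step that does not follow formally from regularity alone.

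The one genuine gap is in the fourth bullet. You establish the two containments $\{(0,\dots,0,\b_{i,j})\}\subseteq\Uo\subseteq\spn\{(0,\dots,0,\b_{i,j})\}$, but the bullet asserts that $\Uo$ --- defined as the \emph{set} of products $a\cdot b$, not their span --- equals the full span. Your own formula shows that the coefficient of $\b_{i,j}$ in a single product $a\cdot b$ is $(a_1)_ib_j+(a_1)_jb_i$, i.e.\ the array of coefficients is always a decomposable alternating $2$-tensor on the $n-d$ strong coordinates; one must still argue that every $\F_2$-combination of the $\b_{i,j}$ is hit by some decomposable tensor. This is automatic when $n-d\le 3$ (every alternating $2$-tensor on at most three generators is decomposable), which covers the regimes $d=s-2$ and $d=s-3$ actually used in this paper, but it is not automatic in general: if $n-d\ge 4$ and the $\b_{i,j}$ are linearly independent, the element with last $d$ components $\b_{1,2}+\b_{3,4}$ lies in the span yet corresponds to an indecomposable tensor and is therefore not a product. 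So either add the hypothesis $n-d\le 3$ (or an argument that the relevant combination is decomposable modulo the kernel of $e_i\wedge e_j\mapsto\b_{i,j}$), or at least flag that surjectivity onto the span does not follow from exhibiting the generators as products.
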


In the subsequent discussion, the term \emph{alternative operation} refers to an additive law $\circ$ on $V$ as defined above.

\section{Parallel operations and their automorphism groups}
\label{sec:aut}

Let $\circ$ be an alternative operation on the block-sized space $V$.  As outlined in the introduction, if $\lambda \in \GL(V,+)$ represents a (xor)-linear diffusion layer, and $\Delta \in V$ is an input difference traversing $\lambda$, predicting the output difference with respect to $\circ$, i.e.
\[
x\lambda \circ (x \circ \Delta)\lambda,
\]
becomes inherently challenging without additional assumptions on $\lambda$ that ensure a sufficiently high predictive probability.
For this reason, the examination of the following object becomes crucial: in cryptographic terms, it contains potential diffusion layers that allow differentials, whether computed with respect to xor or $\circ$, to propagate with a probability of 1.
\begin{definition}
Let $\circ$ be an alternative operation on $V$. Let us define 
\[
H_\circ := \{f \in \GL(V,+) \mid  \forall a,b \in V: \,\,\,(a\circ b)f=af\circ bf\}
\]
to be the subgroup of $\GL(V,+)$ of permutations that are linear w.r.t.\ the operation $\circ$. More precisely, denoting by $\AGL(V,\circ)$ the normaliser in $\sym(V)$ of $T_\circ$ and by $\GL(V,\circ)$ the stabiliser of $0$ in $\AGL(V,\circ)$, we have $H_\circ$ = $\GL(V,+)\cap \GL(V,\circ)$.
\end{definition}

\begin{remark}
Let us note that the mappings in $H_\circ$ are the automorphisms of the algebra $(V,+,\cdot)$ as in Proposition \ref{prop:info_prod}. Therefore, checking whether $\lambda\in\GL(V,+)$ is also in $\GL(V,\circ)$ is equivalent to verifying that $(a\cdot b) \lambda=a\lambda\cdot b\lambda$ for any $a,b\in V$.
\end{remark}

\begin{lemma}
	\label{lem:spazi_fissi}
	For each $\lb \in \Ho$ it holds $\Wo\lb = \Wo$ and $\Uo\lb = \Uo$.
\end{lemma}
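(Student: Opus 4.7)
The plan is to use the characterisation of $H_\circ$ as the automorphism group of the algebra $(V,+,\cdot)$ already pointed out in the remark, so that every $\lb \in H_\circ$ preserves both the sum and the product $\cdot$, and then to describe $W_\circ$ and $U_\circ$ in purely product-theoretic terms, making their invariance under $\lb$ transparent.

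First I would verify explicitly that $\lb$ preserves the product, namely that $(a\cdot b)\lb = a\lb \cdot b\lb$ for all $a,b\in V$. This is immediate from the definition of $\cdot$: expanding
\[
(a\cdot b)\lb = (a+b+a\circ b)\lb = a\lb + b\lb + (a\circ b)\lb = a\lb + b\lb + a\lb\circ b\lb = a\lb\cdot b\lb,
\]
using $\F_2$-linearity of $\lb$ in the first and third equalities, and the fact that $\lb\in H_\circ$ in the second-to-last one.

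From this, the invariance of $U_\circ$ is immediate: since $\lb$ is a bijection of $V$, as $a,b$ range over $V$ so do $a\lb, b\lb$, hence
\[
U_\circ\lb = \{(a\cdot b)\lb \mid a,b\in V\} = \{a\lb\cdot b\lb \mid a,b\in V\} = U_\circ.
\]

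For $W_\circ$ I would first recast membership in terms of the product. By the definition of $W_\circ$ together with $x\circ w = x + w + x\cdot w$ (from Proposition~\ref{prop:info_prod}), one has $w\in W_\circ$ if and only if $x\cdot w = 0$ for every $x\in V$. Now take $w\in W_\circ$ and any $x\in V$; writing $x = y\lb$ with $y \defeq x\lb^{-1}$, we obtain
\[
x\cdot (w\lb) = y\lb\cdot w\lb = (y\cdot w)\lb = 0\lb = 0,
\]
so $w\lb\in W_\circ$, i.e.\ $W_\circ\lb \subseteq W_\circ$. Since $\lb$ is injective and $W_\circ$ is finite, equality follows.

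There is no real obstacle here: once the remark identifying $H_\circ$ with $\Aut(V,+,\cdot)$ is taken seriously, the statement is simply the fact that algebra automorphisms preserve any subset defined by a product-theoretic condition, and $W_\circ$ is the left annihilator of $V$ while $U_\circ$ is the image of the product map.
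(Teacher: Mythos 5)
Your proof is correct and follows essentially the same route as the paper: both arguments rest on $\lb$ preserving the product $\cdot$ (which the paper records in the remark preceding the lemma and you verify inline), and your treatment of $W_\circ$ via the annihilator condition $x\cdot w=0$ is just a trivial reformulation of the paper's direct computation $a\lb\circ b\lb=(a+b)\lb=a\lb+b\lb$. Nothing is missing.
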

\begin{proof}
	Let $\lambda \in \Ho$, and we begin by proving that $\Wo\lambda = \Wo$. Take any $a \in \Wo$ and $b \in V$. We aim to show that $a\lambda \circ b = a\lambda + b$. Now,
$$ a\lambda \circ b\lambda = (a \circ b)\lambda = (a + b)\lambda = a\lambda + b\lambda $$
and since $\lambda$ is invertible, it follows that $\Wo\lambda = \Wo$. \
On the other hand, if $a \in \uo$, then $a = b \cdot c$ for some $b, c \in V$. Therefore, $a\lambda = (b \cdot c)\lambda = b\lambda \cdot c\lambda$, which implies $a\lambda \in \uo$.\qed
\end{proof}

The structure of the group $H_\circ$ in its most general case has not been understood yet. This work addresses this challenge in a specific scenario, guided by assumptions that are deemed reasonable within the context of differential cryptanalysis.
\subsubsection*{Assumption 1: $\circ$ is a parallel operation.}

While the operation $\circ$ could, in theory, be defined on the entire message space $V$, studying the differential properties of the s-box layer, considered as a function with $2^n$ inputs, is impractical for standard-size ciphers. For this reason, we focus on operations applied in a \emph{parallel} way to each s-box-sized block, i.e., $\circ = (\circ_{1}, \circ_{2}, \dots, \circ_{b})$, where for each $1 \leq j \leq b$, $\circ_{j}$ is an operation on $V_j$. In this scenario, every operation is acting independently on the s-box space $B$, regardless of the others. This motivates the following definition.

\begin{definition}
Let $\circ$ be an alternative operation on $V$. We say that $\circ$ is \emph{parallel} if for each $1 \leq j \leq b$ there exists an alternative operation $\circ_j$ on $V_j$ such that for each $x,y \in V$ we have
\[
x \circ y = 
    \begin{pmatrix}
    	x_1 \\ \vdots \\ x_b
    \end{pmatrix}
    \circ
    \begin{pmatrix}
    	y_1 \\ \vdots \\ y_b
    \end{pmatrix}
    =
    \begin{pmatrix}
    	x_1 \circ_1 y_1 \\ \vdots \\ x_b \circ_b y_b
    \end{pmatrix},
    \]
    where $x = (x_1, x_2, \dots, x_b), y = (y_1, y_2, \dots, y_b)$ and each component belongs to the s-box-sized space, i.e. $x_j, y_j \in V_j \cong B$ for $1 \leq j \leq b$.
\end{definition}

In the notation of Sec.~\ref{sec:alt_op}, up to a block matrix conjugation, we can assume that every element $x \in V$ is associated to a translation $\tau_x = M_x \sg_x$, with 
\[
M_x = 
\begin{pmatrix}
	M_{x_1}^{\circ_1} & \cdots & \mathbb{0} \\
	\vdots & \ddots & \vdots \\
	\mathbb{0} & \cdots & M_{x_b}^{\circ_b} 
\end{pmatrix}
\]
where $M_{x_i}^{\circ_i}$ is the matrix associated to the translation $\tau_{x_i}$ with respect to the sum $\circ_i$, as defined in Eq.~\eqref{eq:form}. 

\subsubsection*{Assumption 2: $\dim(W_{\circ_j}) = s-2$.}
According to Eq.~\eqref{eq:bound}, every operation $\circ_j$ defined at the s-box level must satisfy the bound $\dim(W_{\circ_j}) \leq s-2$, being $s = \dim(B)$. 
The situation where the (upper) bound is reached holds particular interest for several reasons, as elaborated further in Civino et al.~\cite{civino2019differential}. Notably, 
\begin{itemize}
\item if the s-box size $s$ is four, the case where $\dim(W_{\circ_j}) = 2$ is the sole possibility;
\item imposing $\dim(W_{\circ_j}) = s - 2$ ensures optimal differential diffusion through the key addition layer (see e.g. \cite[Theorem 3.19]{civino2019differential});
\item other possible choices are at the moment less understood, and seem less suitable for our cryptanalytic application; the case $s-3$ is discussed as an example in Appendix \ref{sec:n3}.
\end{itemize}
For the reader's convenience, we present the classification result for $H_{\circ_j}$ obtained by Civino et al.\ in the considered case. Additionally, it is worth recalling that, according to Theorem \ref{thm:theta}, any $\circ_j$ for which $\dim(W_{\circ_j}) = s-2$ is determined by a single nonzero vector $\mathbf{b} \in (\F_2)^{s-2}$.

\begin{theorem}[\cite{civino2019differential}]
	\label{thm:caratt_ho_semplice}
	Let $\circ_j$ be an alternative operation such that $d= \dim(W_{\circ_j}) = s-2$ defined by a vector $\b \in (\F_2)^{s-2}$, and let $\lb \in (\F_2)^{s\times s}$. The following are equivalent:
	\begin{itemize}
		\item $\lb \in H_{\circ_j}$;
		\item there exist $A \in \GL((\F_2)^2, +)$, $D \in \GL((\F_2)^d, +)$, and $B \in (\F_2)^{2 \times d}$ such that
		\[
		\lb = 
		\begin{pmatrix}
			A & B \\
			\mathbb{0}_{d,2} & D
		\end{pmatrix}
		\]
		and $\b D = \b$.
	\end{itemize} 
\end{theorem}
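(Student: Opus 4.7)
The plan is to use the characterisation from the remark after the definition of $H_\circ$: a map $\lambda\in\GL(V,+)$ lies in $H_{\circ_j}$ if and only if it is an algebra automorphism of $(V,+,\cdot)$, i.e.\ $(a\cdot c)\lambda=a\lambda\cdot c\lambda$ for all $a,c\in V$. So the work splits into a necessity direction, where I extract the block shape and the constraint on $D$ from Lemma~\ref{lem:spazi_fissi}, and a sufficiency direction, where I verify the algebra-automorphism identity.

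For necessity, assume $\lambda\in H_{\circ_j}$. By Lemma~\ref{lem:spazi_fissi}, $W_{\circ_j}\lambda=W_{\circ_j}$. Since, by the canonical normalisation preceding Theorem~\ref{thm:theta}, $W_{\circ_j}=\spn\{e_{3},\dots,e_{s}\}$, the matrix of $\lambda$ (acting on row vectors from the right) must be block upper-triangular:
\[
\lambda=\begin{pmatrix} A & B \\ \zero_{d,2} & D\end{pmatrix},\qquad A\in(\F_2)^{2\times 2},\ B\in(\F_2)^{2\times d},\ D\in(\F_2)^{d\times d},
\]
with both $A$ and $D$ invertible because $\lambda$ is. Next, since $\dim W_{\circ_j}=s-2$, Theorem~\ref{thm:theta} tells us that $\circ_j$ is determined by the single nonzero vector $\mathbf{b}=\mathbf{b}_{1,2}=\mathbf{b}_{2,1}$, and by Proposition~\ref{prop:info_prod} the error space is $U_{\circ_j}=\{\mathbf{0},(0,0,\mathbf{b})\}$. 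Lemma~\ref{lem:spazi_fissi} gives $U_{\circ_j}\lambda=U_{\circ_j}$, and applying $\lambda$ to $(0,0,\mathbf{b})$ yields $(0,0,\mathbf{b}D)$. Since $\mathbf{b}\neq 0$ and $D$ is invertible, $\mathbf{b}D\neq 0$, hence necessarily $\mathbf{b}D=\mathbf{b}$.

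For sufficiency, take $\lambda$ of the above block form with $\mathbf{b}D=\mathbf{b}$; I will check the algebra identity $(a\cdot c)\lambda=a\lambda\cdot c\lambda$ for arbitrary $a,c\in V$, written as $a=(a^{(1)},a^{(2)})$, $c=(c^{(1)},c^{(2)})$ with $a^{(1)},c^{(1)}\in\F_2^2$ and $a^{(2)},c^{(2)}\in\F_2^d$. From Proposition~\ref{prop:info_prod}, unpacking $E_c$ in the case $d=s-2$ gives $a\cdot c=(0,0,(a_1c_2+a_2c_1)\mathbf{b})$. Hence
\[
(a\cdot c)\lambda=\bigl(0,0,(a_1c_2+a_2c_1)\,\mathbf{b}D\bigr)=\bigl(0,0,(a_1c_2+a_2c_1)\mathbf{b}\bigr),
\]
using $\mathbf{b}D=\mathbf{b}$. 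On the other hand, the first two coordinates of $a\lambda$ and $c\lambda$ are $a^{(1)}A$ and $c^{(1)}A$ respectively, so
\[
a\lambda\cdot c\lambda=\bigl(0,0,\bigl((a^{(1)}A)_1(c^{(1)}A)_2+(a^{(1)}A)_2(c^{(1)}A)_1\bigr)\mathbf{b}\bigr).
\]
The scalar in parentheses is precisely the bilinear form $(x,y)\mapsto x_1y_2+x_2y_1$ evaluated at $(a^{(1)}A, c^{(1)}A)$, which over $\F_2$ is $\det(A)\cdot(a_1c_2+a_2c_1)$. Since $A\in\GL((\F_2)^2,+)$ forces $\det(A)=1$, the two expressions agree.

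The only delicate point I expect is the bilinear-form computation at the end, specifically recognising that the cross terms involving $\alpha_{11}\alpha_{12}$ and $\alpha_{21}\alpha_{22}$ vanish in characteristic two so that the scalar collapses cleanly to $\det(A)\cdot(a_1c_2+a_2c_1)$. Everything else is a direct application of Lemma~\ref{lem:spazi_fissi}, the normalisation of $W_{\circ_j}$, and the explicit formula for the product from Proposition~\ref{prop:info_prod}; because $U_{\circ_j}$ has dimension one, the condition $\mathbf{b}D=\mathbf{b}$ alone captures the full algebraic constraint and no further relation on $A$ or $B$ arises.
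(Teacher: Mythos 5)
Your proof is correct. The paper itself only cites this theorem from Civino et al.\ without reproving it, but your argument follows exactly the technique the paper uses for its parallel generalisation (Theorem~\ref{thm:caratt_h0_msomme}): Lemma~\ref{lem:spazi_fissi} forces the block-triangular shape and the constraint $\mathbf{b}D=\mathbf{b}$ via invariance of $W_{\circ_j}$ and $U_{\circ_j}$, and sufficiency reduces to checking that the product $a\cdot c=(0,0,(a_1c_2+a_2c_1)\mathbf{b})$ is preserved — your identity $\beta(xA,yA)=\det(A)\,\beta(x,y)$ is just a cleaner packaging of the paper's observation that two independent non-weak vectors in the same block always multiply to $(0,0,\mathbf{b})$.
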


We now recall the following result from Calderini et al.~\cite[Theorem 4.10]{calderini2021properties}.
\begin{theorem}
    \label{thm:coniugio_n2}
    Let $V = \F_2^s$, and $T$ and $T'$ be elementary abelian subgroups of $\agl(V, +)$ defining two operations $\circ$ and $\diamond$ respectively, such that $\dim(W_\circ) = \dim(W_\diamond) = s-2$. Then, there exists $g \in \gl(V)$ such that $T' = T^g = gTg^{-1}$.
\end{theorem}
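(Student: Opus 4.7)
My plan is to reduce both $T$ and $T'$ to a canonical form parametrised by a single nonzero vector of $\F_2^{s-2}$, and then to use transitivity of $\gl(\F_2^{s-2})$ on nonzero vectors to conjugate one to the other. First I would apply the preliminary normalisation recalled in the preliminaries~\cite{calderini2021properties}: up to conjugation by suitable $g_1, g_2 \in \gl(V,+)$, we may assume $W_\circ = W_\diamond = \spn\{e_3,\dots,e_s\}$. With this in place, Theorem~\ref{thm:theta}, specialised to $n = s$ and $d = s-2$ (so $n-d = 2$), shows that the defining matrix $\Theta$ is a zero-diagonal, symmetric $2 \times 2$ matrix over $\F_{2^{s-2}}$, hence determined by its unique off-diagonal entry $\b \in \F_2^{s-2}$; the non-degeneracy condition on the columns forces $\b \neq 0$. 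Therefore, after the reduction, $T$ is encoded by some nonzero $\b$ and $T'$ by some nonzero $\b'$.

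The heart of the argument is to exhibit a linear conjugation matching any two such nonzero vectors. Since $\gl(\F_2^{s-2})$ acts transitively on $\F_2^{s-2}\setminus\{0\}$, one can choose $D \in \gl(\F_2^{s-2},+)$ that sends $\b$ to $\b'$, and set
\[
h \defeq \begin{pmatrix} \one_2 & \zero_{2,s-2} \\ \zero_{s-2,2} & D \end{pmatrix} \in \gl(V,+).
\]
A short block-matrix calculation on the generators $M_{e_i} = \bigl(\begin{smallmatrix} \one_2 & E_{e_i} \\ \zero & \one_{s-2} \end{smallmatrix}\bigr)$ recalled in the preliminaries shows that conjugation by $h$ transforms the upper-right block as $E_{e_i} \mapsto E_{e_i} D$ (up to replacing $D$ by $D^{-1}$), so that the row $\b = \b_{1,2}$ of $E_{e_1}$ is sent to $\b'$, and similarly for $\b_{2,1}$. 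Since $h$ fixes $e_1$ and $e_2$ and stabilises $\spn\{e_3,\dots,e_s\}$ (on which all $E_a$ vanish because $W_\circ$ consists of weak vectors), no reshuffling of generators disturbs the identification, and the defining matrix of the conjugated group is exactly $\Theta_\diamond$.

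Chaining the preliminary reductions with $h$ then produces the desired $g \in \gl(V)$ with $T^g = T'$. The main obstacle is the block-matrix bookkeeping around the conjugation by $h$: one must verify that $h$ acts on the single free parameter $\b$ of the defining matrix as linear multiplication by $D$, while preserving the remaining canonical structure (zero diagonal, symmetry, vanishing of $E_a$ for $a \in W_\circ$). Once this is established, the conclusion is immediate from the transitivity of $\gl(\F_2^{s-2})$ on nonzero vectors; the remaining group-theoretic properties (elementary abelian, regular, contained in $\agl(V,+)$) are automatically preserved under conjugation by any element of $\gl(V,+)$.
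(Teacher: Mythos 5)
Your proof is correct. Note that the paper itself does not prove this statement --- it is imported verbatim from Calderini et al.~\cite{calderini2021properties} --- so there is no internal proof to compare against; your argument (normalise $W_\circ$ and $W_\diamond$ to $\spn\{e_3,\dots,e_s\}$, invoke Theorem~\ref{thm:theta} to reduce each group to a single nonzero defining vector $\b\in\F_2^{s-2}$, and conjugate by $\mathrm{diag}(\one_2,D)$ using transitivity of $\GL(\F_2^{s-2},+)$ on nonzero vectors) is a valid, self-contained derivation along exactly the lines one would expect. The only point worth flagging is that Theorem~\ref{thm:theta} requires $T_+<\AGL(V,\circ)$, which is not listed in the statement of the theorem but is a standing assumption of the paper (and is in fact automatic when $\dim(W_\circ)=s-2$), so your use of the canonical form is legitimate.
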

This theorem can be easily extended to our parallel setting.
\begin{theorem}
	\label{thm:coniugio_parallelo}
	Let $\circ, \diamond$ be two parallel operation, defined by $\circ_1,...,\circ_b$ and $\diamond_1,...,\diamond_b$ respectively, such that for all $\circ_i, \diamond_i$ the weak key space has dimension $s-2$. Then, there exists $g \in \GL(V)$ such that $T_\diamond = T_\circ^g$. 
\end{theorem}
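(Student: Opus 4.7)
The plan is to reduce the statement to the per-block result of Theorem~\ref{thm:coniugio_n2} and then assemble the per-block conjugators into a single block-diagonal element of $\GL(V)$.

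First, I would apply Theorem~\ref{thm:coniugio_n2} separately to each pair $(\circ_i, \diamond_i)$. Since both operations live on the $s$-dimensional space $V_i \cong B$, and both weak-key subspaces $W_{\circ_i}$, $W_{\diamond_i}$ have dimension $s-2$, the theorem yields a map $g_i \in \GL(V_i, +)$ such that $T_{\diamond_i} = g_i T_{\circ_i} g_i^{-1}$, for every $1 \leq i \leq b$.

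Next, I would define $g \defeq \mathrm{diag}(g_1, \dots, g_b) \in \GL(V,+)$; invertibility is immediate from invertibility of each $g_i$. By the parallel assumption, any $\tau_x \in T_\circ$ with $x = (x_1, \dots, x_b)$ acts on $V$ blockwise, coinciding on $V_i$ with $\tau_{x_i}^{\circ_i} \in T_{\circ_i}$. Because $g$ is also block-diagonal with blocks matching the decomposition $V = V_1 \oplus \dots \oplus V_b$, conjugation by $g$ respects this decomposition:
\[
g\, \tau_x\, g^{-1} = \mathrm{diag}\!\bigl(g_1 \tau_{x_1}^{\circ_1} g_1^{-1},\, \dots,\, g_b \tau_{x_b}^{\circ_b} g_b^{-1}\bigr).
\]
Each blockwise factor lies in $T_{\diamond_i}$ by the choice of $g_i$, so the whole map belongs to $T_\diamond$. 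Hence $g T_\circ g^{-1} \subseteq T_\diamond$, and since both groups have order $2^n$, this inclusion is an equality.

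The only point requiring mild care is that the elements of $T_\circ$ are affine, not linear, so blockwise conjugation must be justified at the affine level. This is handled by the factorisation $\tau_x = M_x \sg_x$ recalled in Sec.~\ref{sec:alt_op}: under the parallel assumption both $M_x$ and $\sg_x$ are themselves block-diagonal with respect to $V = V_1 \oplus \dots \oplus V_b$, so conjugation by the block-diagonal $g$ passes through each factor independently, and no mixing between blocks ever occurs. Once this observation is in place, the argument is essentially bookkeeping and no further subtlety arises. \qed
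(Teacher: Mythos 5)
Your proposal is correct and follows essentially the same route as the paper: apply Theorem~\ref{thm:coniugio_n2} blockwise to obtain the $g_i$ and assemble them into a block-diagonal $g \in \GL(V)$. Your additional care about conjugating the affine maps $\tau_x = M_x\sg_x$ blockwise is a welcome elaboration of a step the paper leaves implicit.
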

\begin{proof}
    For each $\circ_i$ and $\diamond_i$, Theorem \ref{thm:coniugio_n2} provides a matrix $g_i \in \gl(V_i)$ such that 
\[
T_{\diamond_i} = \{ M_{x_i}^{\diamond_i} \}_{x_i \in V_i} = g_i T_{\circ_i} g_i^{-1} = g_i \{ M_{x_i}^{\circ_i} \}_{x_i \in V_i} g_i^{-1}.
\]
Given the structure of the translations $\tau_x$, we can construct a matrix
\[
g = 
\begin{pmatrix}
   g_1 & \cdots & \mathbb{0} \\
   \vdots & \ddots & \vdots \\
   \mathbb{0} & \cdots & g_b
\end{pmatrix}.
\]
By construction, $g \in \gl(V)$. Additionally, Theorem \ref{thm:coniugio_n2} ensures that $T_\diamond$ is the conjugate of $T_\circ$ via $g$. This completes the proof.\qed
\end{proof}
\begin{lemma}
\label{lem:coniugio_ho}
    Let $T_\circ$ and $T_\diamond$ be elementary abelian regular subgroups of $AGL(V, +)$ defining two operations $\circ$ and $\diamond$ respectively, and such that $T_\diamond = T_\circ^g$ for $g \in \GL(V)$. Then $H_\diamond = H_\circ^g$.
\end{lemma}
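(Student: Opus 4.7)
The plan is to unpack the definition $H_\circ = \GL(V,+) \cap \GL(V,\circ)$ given in the paper (where $\GL(V,\circ)$ is the stabiliser of $0$ in the normaliser of $T_\circ$ in $\sym(V)$) and then track how conjugation by $g \in \GL(V)$ interacts with each of the two factors separately. The hypothesis $T_\diamond = T_\circ^g$ together with $g \in \GL(V,+)$ should make each factor behave well under conjugation, and then the conclusion will follow from the fact that $(A \cap B)^g = A^g \cap B^g$ for any subsets $A, B$ of a group.

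The first step is to pass from translation groups to their normalisers. Since $T_\diamond = T_\circ^g$ and conjugation is an automorphism of $\sym(V)$, we have $N_{\sym(V)}(T_\diamond) = N_{\sym(V)}(T_\circ)^g$, that is, $\AGL(V,\diamond) = \AGL(V,\circ)^g$. The second step is to restrict both sides to the stabiliser of $0$: because $g \in \GL(V)$ we have $0 g = 0 g^{-1} = 0$, so for any $h \in \AGL(V,\circ)$ the conjugate $g^{-1} h g$ fixes $0$ iff $h$ does; equivalently, the stabiliser of $0$ is preserved by conjugation by $g$. This yields $\GL(V,\diamond) = \GL(V,\circ)^g$.

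The third step is the observation that $\GL(V,+)^g = \GL(V,+)$, which is immediate because $g \in \GL(V,+)$ and $\GL(V,+)$ is a subgroup. Combining the three ingredients,
\[
H_\circ^g = \bigl(\GL(V,+) \cap \GL(V,\circ)\bigr)^g = \GL(V,+)^g \cap \GL(V,\circ)^g = \GL(V,+) \cap \GL(V,\diamond) = H_\diamond,
\]
which is the claim.

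There is no serious obstacle here: the only thing one has to be a little careful about is using the postfix-notation convention for conjugation ($h^g = g^{-1} h g$) consistently, and invoking the two distinct properties of $g$, namely that it lies in $\GL(V,+)$ (so it normalises $\GL(V,+)$ under conjugation) and that as a linear map it fixes $0$ (so conjugation by $g$ preserves the stabiliser of $0$ inside any subgroup of $\sym(V)$). Everything else is bookkeeping around the identity $(A \cap B)^g = A^g \cap B^g$.
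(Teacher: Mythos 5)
Your proof is correct and follows exactly the same route as the paper's: pass from $T_\diamond = T_\circ^g$ to the normalisers to get $\AGL(V,\diamond) = \AGL(V,\circ)^g$, then to the stabilisers of $0$ (using that $g$ fixes $0$), and finally intersect with $\GL(V,+)$, which is preserved since $g \in \GL(V,+)$. The only difference is that you spell out the intermediate justifications slightly more explicitly than the paper does.
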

\begin{proof}
    We know that $\agl(V, \circ)$ and $\agl(V, \diamond)$ are the normalisers of $T_\circ$ and $T_\diamond$, respectively. Since $T_\diamond = T_\circ^g$, it follows that $\agl(V, \diamond) = \agl(V, \circ)^g$. Consequently, we also have $\GL(V, \diamond) = \GL(V, \circ)^g$, where $\GL(V, \circ)$ is the stabiliser of $0$ in $\agl(V, \circ)$. Finally, the intersection with $\GL(V, +)$ remains preserved since $g \in \GL(V, +)$.\qed
\end{proof}

\begin{remark}
	\label{rmk:una_sola_somma}
	In the case $d = s-2$, Theorem \ref{thm:coniugio_parallelo} together with Lemma \ref{lem:coniugio_ho} allow us to restrict, up to conjugation, to the case $\circ_1 = ... = \circ_b$. For the sake of simplicity, and without loss of generality, we adopt this assumption from now on.
\end{remark}
We are now ready to present the characterisation of elements in the group $H_\circ$  for a parallel operation $\circ = (\circ_{1}, \circ_{2}, \dots, \circ_{b})$ with components at the s-box level satisfying $\dim(W_{\circ_j}) = s-2$. 
Notice that this more general result is valid for \emph{all} s-boxes in the relevant case  $s=4$ (see first comment in Assumption~2), but can also be valid for larger s-boxes. Therefore, the approach can be theoretically scaled to any SPN.
\begin{theorem}
	\label{thm:caratt_h0_msomme}
	Let $\circ =(\circ_{1}, \circ_{2}, \dots, \circ_{b})$ be a parallel alternative operation on $V$ such that for each $1\leq j \leq b$ $\circ_j$ is an alternative operation on $V_j$. Let us assume that every $\circ_{j}$ is such that $\dim(W_{\circ_j}) = s-2$ and it is defined by a vector $\b \in (\F_2)^{s-2}$.
	Let $\lb \in (\F_2)^{n\times n}$. Then, $\lb \in H_\circ$ if and only if it can be represented in the block form
	\[ 
	\lambda = 
	\left(\begin{array}{@{}c|c|c@{}}
		\begin{matrix}
			A_{11} & B_{11} \\
			C_{11} & D_{11}
		\end{matrix}
		&
		\cdots
		&
		\begin{matrix}
			A_{1b} & B_{1b} \\
			C_{1b} & D_{1b}
		\end{matrix}
		\\
		\hline
		\vdots
		&
		\ddots
		&
		\vdots
		\\
		\hline
		\begin{matrix}
			A_{b1} & B_{b1} \\
			C_{b1} & D_{b1}
		\end{matrix}
		&
		\cdots
		&
		\begin{matrix}
			A_{bb} & B_{bb} \\
			C_{bb} & D_{bb}
		\end{matrix}
		\\
	\end{array}\right),
	\]
	where
	\begin{enumerate}
		\item $A_{ij} \in (\F_2)^{2\times 2}$ such that for each row and each column of blocks there exists one and only one nonzero $A_{ij}$; moreover, all the nonzero $A_{ij}$ are invertible;
		\item $B_{ij} \in (\F_2)^{2 \times (s-2)}$;
		\item $C_{ij} = \mathbb{0}_{(s-2) \times 2}$;
		\item $D_{ij} \in (\F_2)^{(s-2) \times (s-2)}$ such that if $A_{ij}$ is zero, then $\b D_{ij}=\textbf{0}$, and if $A_{ij}$ is invertible, then $\b D_{ij} = \b$. Moreover, the matrix $D$ defined by
		\[
		D \defeq
		\begin{pmatrix}
			D_{11} & \cdots & D_{1b} \\
			\vdots & \ddots & \vdots \\
			D_{b1} & \cdots & D_{bb}
		\end{pmatrix}
		\]
		is invertible.
	\end{enumerate}
\end{theorem}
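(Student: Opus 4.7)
My plan is to split the equivalence into necessity and sufficiency, with the former being the substantive direction. The overarching tool is the observation, noted just after the definition of $H_\circ$, that an element of $\GL(V,+)$ belongs to $H_\circ$ if and only if it is an automorphism of the algebra $(V,+,\cdot)$ of \Cref{prop:info_prod}. Under the hypotheses of the theorem the product $\cdot$ is particularly rigid: denoting by $e_1^{(i)},e_2^{(i)}$ the two active basis vectors of $V_i$ and by $u^{(i)}\in V_i\subs V$ the vector whose weak part is $\b$ in block $i$ and zero elsewhere, one has $e_1^{(i)}\cdot e_2^{(i)}=u^{(i)}$, $\Uo=\spn\{u^{(1)},\dots,u^{(b)}\}$, and every other product of basis vectors vanishes (in particular $\Wo$ annihilates $\cdot$). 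Preservation of $\cdot$ therefore reduces to a concrete finite system of equations on the blocks $A_{ij},B_{ij},C_{ij},D_{ij}$, which I intend to solve directly.

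For the necessity direction, I assume $\lambda\in H_\circ$. \Cref{lem:spazi_fissi} gives $\Wo\lambda=\Wo$, and since $\Wo$ is spanned by the weak basis vectors of every block this immediately forces $C_{ij}=\mathbb{0}$ for all $i,j$. I then apply the algebra-automorphism condition to the pair $e_1^{(i)},e_2^{(i)}$: a blockwise computation using $\F_2$-bilinearity of $\cdot$ shows that the weak $V_j$-component of $e_1^{(i)}\lambda\cdot e_2^{(i)}\lambda$ equals $\det(A_{ij})\,\b$, while the corresponding component of $u^{(i)}\lambda$ equals $\b D_{ij}$, giving $\b D_{ij}=\det(A_{ij})\,\b$. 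Applying the same condition to two active basis vectors living in different input blocks $V_i,V_k$ (whose product vanishes) produces, for every $j$, the cross-block identity $\det(rA_{ij},sA_{kj})=0$ for all $r,s\in\F_2^2$; equivalently, the rowspaces of $A_{ij}$ and $A_{kj}$ lie in a common one-dimensional subspace of $\F_2^2$ whenever $i\neq k$.

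Condition~1 then follows by combining this cross-block constraint with the invertibility of $\lambda$. Since $\lambda$ preserves $\Wo$ by \Cref{lem:spazi_fissi}, it induces an invertible map $\bar\lambda$ on $V/\Wo$ whose $(i,j)$-block with respect to the natural decomposition is exactly $A_{ij}$. Surjectivity of $\bar\lambda$ onto the $j$-th output component requires $\sum_i\mathrm{rowspace}(A_{ij})=\F_2^2$; played off against the pairwise one-dimensional constraint, this forces exactly one $A_{ij}$ in each column $j$ to be of rank two (hence invertible), with all other blocks in that column equal to zero. A pigeonhole argument using injectivity of $\bar\lambda$ on each input $V_i$ upgrades this to the same dichotomy along rows. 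Condition~4 is then a direct consequence of $\b D_{ij}=\det(A_{ij})\b$ together with $\det(A_{ij})\in\{0,1\}$, while invertibility of $D$ follows from the fact that $\lambda$ restricts to a bijection of $\Wo$ whose matrix in the weak basis is precisely $D$.

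For sufficiency, given $\lambda$ of the prescribed form, invertibility follows by a reordering of coordinates that puts $\lambda$ into block upper triangular form, with active-to-active part equal to a permutation of invertible $2\times 2$ blocks and weak-to-weak part equal to the assumed-invertible $D$. Preservation of $\cdot$ then reduces via $\F_2$-bilinearity to the equation $\b D_{ij}=\det(A_{ij})\,\b$ checked block by block, which condition~4 guarantees. The step I expect to be the main obstacle is exactly this combinatorial core of the necessity argument: the pairwise rowspace constraint on the $A_{ij}$ is by itself rather weak, and it is only in conjunction with the invertibility of $\bar\lambda$ that it yields the clean ``permutation of invertible $2\times 2$ blocks'' picture of condition~1.
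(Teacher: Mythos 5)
Your proposal is correct and follows essentially the same route as the paper's proof: both directions rest on viewing $H_\circ$ as the automorphism group of the algebra $(V,+,\cdot)$, using Lemma~\ref{lem:spazi_fissi} to force $C_{ij}=\mathbb{0}$ and the invertibility of $D$, and then evaluating the preserved product on strong basis vectors within and across blocks to pin down the $A_{ij}$ and $D_{ij}$, with the converse checked case by case on basis pairs. Your packaging of the same-block computation as $\b D_{ij}=\det(A_{ij})\,\b$, and of condition~1 as a pairwise rowspace constraint played against the invertibility of the map induced on $V/\Wo$, is a somewhat tidier and more explicit write-up of the step the paper handles by contradiction, but it is not a different method.
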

\begin{proof}

First, let us introduce some notation to simplify the proof. We denote \( e_i^j \) as the vector with a one in the \( i \)-th component of the \( j \)-th block, which is equivalent to \( e_{s(j-1) + i} \). Thus, the \( i \)-th strong component (the component spanned by nonweak keys in the \( i \)-th block) is \( \spn\{ e^i_1, e^i_2 \} \), and the \( i \)-th weak component (also denoted as \( \wo^i \)) is \( \spn\{ e^i_3, \dots, e^i_s \} \). We define \( \wo = \wo^1 \oplus \cdots \oplus \wo^b \).\\

\noindent (\(\Rightarrow\)) Let \( \lambda \in \ho \). We need to show that if \( \lambda \) is divided into blocks as described, all the blocks will have the desired properties. Notice that by definition, the block \( A_{ij} \) maps the \( i \)-th strong component to the \( j \)-th strong component, the block \( B_{ij} \) maps the \( i \)-th strong component to the \( j \)-th weak component, the block \( C_{ij} \) maps the \( i \)-th weak component to the \( j \)-th strong component, and the block \( D_{ij} \) maps the \( i \)-th weak component to the \( j \)-th weak component.

Firstly, by Lemma \ref{lem:spazi_fissi} (which extends naturally to the parallel setting), we have \( \wo \lambda = \wo \). This implies that \( C_{ij} = 0 \) for all \( i, j \). Consequently, \( \wo \) is contained in the image of \( D \); since \( \dim(\wo) = b(s-2) \), \( D \) must be invertible.

Now, let us focus on the blocks \( A_{ij} \). If we take a vector \( x_i \) with nonzero strong components in \( V_i \), we want to prove that \( x_i \lambda = x_j \) where \( x_j \) is a vector with nonzero strong components in \( V_j \) for some \( j \). We know that \( e^i_1 \cdot e^i_2 \neq 0 \); however, \( e^j_{i_1} \cdot e^k_{i_2} = 0 \) if \( j \neq k \), for all combinations of \( i_1, i_2 \in \{1, 2\} \).

Consider \( \spn \{e^i_1, e^i_2 \} \lambda \), and assume it contains an element with more than one nonzero strong component. Without loss of generality, suppose that the strong component of \( e_1^i \lambda\) is given by \(e^j_{r} + e^k_s \),  for some $k\ne j$ and \(r,s\in\{1,2\}\). Since we started with a 2-dimensional subspace, its image cannot span both strong subspaces of \( V_j \) and \( V_k \). If it does not generate \( V_k \), we would need another basis vector \( e^l_t \) with \( l \neq i \) such that \( e^l_t \lambda \) has a nonzero strong component in \( V_k \) different from \(e^k_s \). This implies
\[
0 = e_1^i \cdot e^l_t = (e_1^i \cdot e^l_t) \lambda = e_1^i \lambda \cdot e^l_t \lambda \neq 0,
\]
which is a contradiction. Therefore, each vector with only one nonzero strong component is mapped to a vector with only one nonzero strong component, meaning that for each row (or column) there is exactly one nonzero \( A_{ij} \). Furthermore, the nonzero \( A_{ij} \) must span the \( j \)-th strong subspace and must be invertible.

Since \( d = s-2 \), \( e^i_1 \cdot e^i_2 \) is the vector with \((0, 0, \b)\) in \( V_i \) and zero elsewhere. Therefore, the component in \( V_j \) of \( e^i_1 \cdot e^i_2 \) is \( \b D_{ij} \). If \( A_{ij} = 0 \), it must be that \( e^i_1 A_{ij} \cdot e^i_2 A_{ij} = 0 \), which implies \( \b D_{ij} = 0 \). Conversely, if \( A_{ij} \) is invertible, then \( \b D_{ij} = \b \) due to Lemma \ref{lem:spazi_fissi}.\\

\noindent(\(\Leftarrow\)) We now need to show that if \( \lambda \) has the form described in the statement, then \( \lambda \in \ho\). Since \( \lambda \in \gl(V,+) \) by construction, we only need to verify that \( (x \cdot y)\lambda = x\lambda \cdot y\lambda \) for all \( x, y \in V \). Moreover, since by Proposition \ref{prop:info_prod} the dot product is distributive with respect to \( + \), we can check this equality component-wise.

If either \( x \) or \( y \) is a weak key, both sides of the equation are zero. If \( x \) is a weak key, then \( x \cdot y = 0 \) by definition of the dot product. On the right-hand side, since all \( C_{ij} = 0 \), \( x\lambda \in \wo \), and thus \( x\lambda \cdot y\lambda = 0 \).

We are left with two cases: strong vectors from different blocks and strong vectors from the same block. For the first case, consider \( (e^j_{i_1} \cdot e^k_{i_2})\lambda = e^j_{i_1} \lambda \cdot e^k_{i_2} \lambda \) for \( j \neq k \) and \( i_1, i_2 \in \{1, 2\} \). Since \( j \neq k \) implies \( e^j_{i_1} \cdot e^k_{i_2} = 0 \), the condition on the \( A \) blocks of \( \lambda \) ensures that the part of \( e^j_{i_1} \lambda \) and \( e^k_{i_2}\lambda \) that lies outside \( \wo \) ends up in different blocks. Thus, \( e^j_{i_1} \lambda \cdot e^k_{i_2}\lambda = 0 \). No restriction is imposed on the components of \( e^j_{i_1} \lambda \) and \( e^k_{i_2}\lambda \) in \( \wo \), as the \( B \) blocks of \( \lambda \) are unrestricted. However, these components do not affect the dot product.

For the second case, strong vectors from the same block, consider \( (e^j_{i_1} \cdot e^j_{i_2})\lambda = e^j_{i_1} \lambda \cdot e^j_{i_2} \lambda \). If \( i_1 = i_2 \), both sides are zero. Without loss of generality, assume \( i_1 = 1 \) and \( i_2 = 2 \). Then, \( e_1^j \cdot e_2^j \) is a vector with \( (0, 0, \b) \) in the components corresponding to \( V_j \), and zero elsewhere. For a given \( j \), there is exactly one index \( k \) such that \( \b D_{jk} = \b \), and for all \( l \neq k \), \( \b D_{jl} = 0 \). Thus, the left-hand side is a vector with \( (0, 0, \b) \) in the components of \( V_k \) and zero elsewhere. Since \( A_{jk} \) is invertible and \( A_{jl} = 0 \) for \( l \neq k \), \( e_1^j \lambda \) and \( e_2^j \lambda \) are distinct vectors in the subspace \( V_k \), and their dot product is a vector with \( (0, 0, \b) \) in the same component, proving the equality. This completes the proof.\qed

\end{proof}

\section{Differential properties of optimal s-boxes}
\label{sec:sboxes}

In this section, we examine the differential properties of all possible 4-bit permutations, with respect to all possible alternative operations defined as in Sec.~\ref{sec:alt_op}. In particular, we set $s=4$ and therefore consider $\mathbb B:= \mathbb F_2^4$. We begin by acknowledging that, despite the compact size of the space, the count of alternative operations on $\mathbb B$ is considerable: 
%\RI{poco sopra (subito dopo l'assumption $s-2$) diciamo che una somma è determinata da un singolo vettore $\b \in \F_2^{2-s}$, che per $s=4$ sono 3 vettori + l'identità, mentre qua diciamo che ce ne sono 105; vogliamo aggiungere un commento su come sono coniugate? cosa cambia se consideriamo la media solo su quelle 3?}

\begin{proposition}[\cite{calderini2021properties}]\label{prop:dim4alt}
    There exist 105 elementary abelian regular subgroups groups $T_\circ$ in $\mathrm{AGL}(\mathbb{F}_2^4,+)$. Furthermore,  each of them satisfies $T_+<\mathrm{AGL}(\mathbb{F}_2^4,\circ)$ and $\dim W_\circ=s-2=2$.
\end{proposition}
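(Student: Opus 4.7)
The plan is to combine the dimensional bound of Eq.~\eqref{eq:bound} with the parameterisation of Theorem~\ref{thm:theta} and a count of $2$-dimensional subspaces. Since $n=4$, the inequality $2 - (n \bmod 2) \leq \dim(W_\circ) \leq n-2$ collapses to $\dim W_\circ = 2$, which simultaneously establishes the dimension claim in the statement and forces $d = n-2$, placing us in the extremal regime to which Theorem~\ref{thm:theta} applies verbatim.

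First I will fix a $2$-dimensional subspace $W \subseteq \F_2^4$ and count the operations $\circ$ with $\Wo = W$ that satisfy $T_+ < \AGL(\F_2^4,\circ)$. After conjugating by an element of $\GL(V,+)$ that sends $W$ to $\spn\{e_3,e_4\}$, Theorem~\ref{thm:theta} identifies such operations with the defining matrices $\Theta_\circ \in (\F_{2^2})^{2\times 2}$ that are symmetric and zero-diagonal, with entries $\b_{i,j} \in \F_2^2$. Symmetry and the zero diagonal leave a single free entry $\b_{1,2}$, and the column-independence condition collapses to $\b_{1,2} \neq \mathbf{0}$. This yields exactly $2^2 - 1 = 3$ operations per choice of $W$.

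Then I would multiply by the number of possible $W$'s. The count of $2$-dimensional $\F_2$-subspaces of $\F_2^4$ is the Gaussian binomial $\binom{4}{2}_2 = \frac{(2^4-1)(2^4-2)}{(2^2-1)(2^2-2)} = 35$. Since $\Wo$ is an invariant of $T_\circ$, distinct choices of $W$ yield disjoint families of subgroups; within each family, distinct defining matrices correspond to distinct operations and hence to distinct subgroups of $\AGL(\F_2^4,+)$. Multiplying gives the desired total $35 \cdot 3 = 105$ elementary abelian regular subgroups for which $T_+ < \AGL(\F_2^4,\circ)$.

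The main obstacle is to show that \emph{every} elementary abelian regular subgroup $T_\circ < \AGL(\F_2^4,+)$ automatically satisfies the normalising condition $T_+ < \AGL(\F_2^4,\circ)$, so that the count above is exhaustive and the statement of the proposition is complete. In arbitrary dimension this property is a standing hypothesis rather than a consequence, so the argument must exploit the smallness of $n=4$ together with the forced maximality $\dim \Wo = n-2$. I would either give a structural argument based on the rigidity of the class-two nilpotent algebra $(V,+,\cdot)$ of Proposition~\ref{prop:info_prod} (arguing that in the extremal case $d=n-2$ the commutator relations leave no room for $T_+$ to fail to normalise $T_\circ$), or, failing a clean structural proof, invoke a finite enumeration within $\AGL(\F_2^4,+)$, whose order is small enough for a direct computer-algebra verification.
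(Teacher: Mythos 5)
This proposition is imported from Calderini et al.\ and the paper offers no proof of its own, so your proposal has to stand on its own merits. The counting half of your argument is sound and is essentially the intended one: with $d=2$ the defining matrix of Theorem~\ref{thm:theta} reduces to a single off-diagonal entry $\b_{1,2}\in\F_2^2$, the column condition reduces to $\b_{1,2}\neq\mathbf 0$, giving $3$ operations with $\Wo=\spn\{e_3,e_4\}$, and conjugation transports this count to each of the $35$ two-dimensional subspaces, for a total of $105$. The observation that distinct $W$'s give disjoint families and that $\Theta_\circ$ determines $T_\circ$ is also fine.

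The genuine gap is exactly the one you flag and then leave open: you never establish that \emph{every} elementary abelian regular subgroup of $\AGL(\F_2^4,+)$ satisfies $T_+<\AGL(\F_2^4,\circ)$, and without this the $105$ you construct is only a lower bound and the ``furthermore'' clause is unproved. Worse, the step you present as unconditional is in fact contaminated by the same issue: Eq.~\eqref{eq:bound} is stated in Sect.~1.2 under the standing hypothesis $T_+<\AGL(V,\circ)$, so you cannot invoke the lower bound $\dim\Wo\geq 2$ for an arbitrary elementary abelian regular subgroup without first knowing the normalising condition (or citing a version of the bound that does not need it); and Theorem~\ref{thm:theta} only parameterises operations that \emph{already} satisfy $T_+<\agl(V,\circ)$, so it cannot certify that no further subgroups exist. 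The missing ingredient is the fact that the associated commutative nilpotent algebra $(V,+,\cdot)$ of Proposition~\ref{prop:info_prod} necessarily has $V\cdot V\cdot V=0$ (equivalently $\Uo\subseteq\Wo$) when $n=4$; this is a theorem of the cited reference, not a formality, and your two proposed routes (a rigidity argument or a machine enumeration of $\AGL(\F_2^4,+)$) are sketched but not executed. To close the proof you should either carry out the short algebra argument ruling out nilpotency class $\geq 3$ in dimension $4$ with the alternating condition $a\cdot a=0$, or explicitly cite the corresponding statement of Calderini et al.\ rather than re-deriving it.
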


We recall that given a permutation $f\in \sym(\mathbb B)$ we can define \[\delta_f(a,b)=\#\{x \in B\mid xf+(x+a)f=b\}.\] 

The \emph{differential uniformity} of $f$ is defined as $\delta_f:=\max_{a\ne 0}\delta_f(a,b)$ and it represents the primary metric to consider when assessing the resistance of an s-box to differential cryptanalysis~\cite{nyberg1993differentially}.

Several cryptographic properties, including differential uniformity, are preserved under affine equivalence for vectorial Boolean functions \cite{carlet2021boolean}.
Two functions, denoted as $f$ and $g$, are considered \emph{affine equivalent} if there exist two affine permutations, $\alpha$ and $\beta$, in $\AGL(V,+)$ such that $g = \beta  f  \alpha$.

Leander and Poschmann~\cite{leander2007classification} provided a comprehensive classification (up to affine equivalence) of permutations over $\mathbb B= \mathbb F_2^4$. They identified 16 classes with \emph{optimal} cryptographic properties. All 16 classes exhibit a classical differential uniformity equal to 4, which represents the best possible value for s-boxes in $\sym(\mathbb B)$. The representatives of the 16 classes are listed in Table \ref{tab:rep}, where each vector is interpreted as a binary number, most significant bit first.

\begin{table}[]
    \label{tab:rep}
        \caption{Optimal 4-bit permutations according to Leander and Poschmann}
    \centering
     \begin{tabular}{l||cccccccccccccccc|}
	&{\hex{0}} &{\hex{1}} &{\hex{2}}& \hex{3}& \hex{4}& \hex{5}& \hex{6}& \hex{7}& \hex{8}& \hex{9}& \hex{A}& \hex{B}& \hex{C}& \hex{D}& \hex{E}& \hex{F}\\
	\hline\hline
 $G_0$ &\hex{0}&\hex{1}&\hex{2}&\hex{D}&\hex{4}&\hex{7}&\hex{F}&\hex{6}&\hex{8}&\hex{B}&\hex{C}&\hex{9}&\hex{3}&\hex{E}&\hex{A}&\hex{5}\\
 \hline
$G_1$ &\hex{0}&\hex{1}&\hex{2}&\hex{D}&\hex{4}&\hex{7}&\hex{F}&\hex{6}&\hex{8}&\hex{B}&\hex{E}&\hex{3}&\hex{5}&\hex{9}&\hex{A}&\hex{C}\\
\hline
$G_2$ &\hex{0}&\hex{1}&\hex{2}&\hex{D}&\hex{4}&\hex{7}&\hex{F}&\hex{6}&\hex{8}&\hex{B}&\hex{E}&\hex{3}&\hex{A}&\hex{C}&\hex{5}&\hex{9}\\\hline
$G_3 $&\hex{0}&\hex{1}&\hex{2}&\hex{D}&\hex{4}&\hex{7}&\hex{F}&\hex{6}&\hex{8}&\hex{C}&\hex{5}&\hex{3}&\hex{A}&\hex{E}&\hex{B}&\hex{9}\\\hline
$G_4$ &\hex{0}&\hex{1}&\hex{2}&\hex{D}&\hex{4}&\hex{7}&\hex{F}&\hex{6}&\hex{8}&\hex{C}&\hex{9}&\hex{B}&\hex{A}&\hex{E}&\hex{5}&\hex{3}\\\hline
$G_5$ &\hex{0}&\hex{1}&\hex{2}&\hex{D}&\hex{4}&\hex{7}&\hex{F}&\hex{6}&\hex{8}&\hex{C}&\hex{B}&\hex{9}&\hex{A}&\hex{E}&\hex{3}&\hex{5}\\\hline
$G_6$ &\hex{0}&\hex{1}&\hex{2}&\hex{D}&\hex{4}&\hex{7}&\hex{F}&\hex{6}&\hex{8}&\hex{C}&\hex{B}&\hex{9}&\hex{A}&\hex{E}&\hex{5}&\hex{3}\\\hline
$G_7 $&\hex{0}&\hex{1}&\hex{2}&\hex{D}&\hex{4}&\hex{7}&\hex{F}&\hex{6}&\hex{8}&\hex{C}&\hex{E}&\hex{B}&\hex{A}&\hex{9}&\hex{3}&\hex{5}\\\hline
$G_8$ &\hex{0}&\hex{1}&\hex{2}&\hex{D}&\hex{4}&\hex{7}&\hex{F}&\hex{6}&\hex{8}&\hex{E}&\hex{9}&\hex{5}&\hex{A}&\hex{B}&\hex{3}&\hex{C}\\\hline
$G_9$ &\hex{0}&\hex{1}&\hex{2}&\hex{D}&\hex{4}&\hex{7}&\hex{F}&\hex{6}&\hex{8}&\hex{E}&\hex{B}&\hex{3}&\hex{5}&\hex{9}&\hex{A}&\hex{C}\\\hline
$G_{10}$&\hex{0}&\hex{1}&\hex{2}&\hex{D}&\hex{4}&\hex{7}&\hex{F}&\hex{6}&\hex{8}&\hex{E}&\hex{B}&\hex{5}&\hex{A}&\hex{9}&\hex{3}&\hex{C}\\\hline
$G_{11}$&\hex{0}&\hex{1}&\hex{2}&\hex{D}&\hex{4}&\hex{7}&\hex{F}&\hex{6}&\hex{8}&\hex{E}&\hex{B}&\hex{A}&\hex{5}&\hex{9}&\hex{C}&\hex{3}\\\hline
$G_{12}$&\hex{0}&\hex{1}&\hex{2}&\hex{D}&\hex{4}&\hex{7}&\hex{F}&\hex{6}&\hex{8}&\hex{E}&\hex{B}&\hex{A}&\hex{9}&\hex{3}&\hex{C}&\hex{5}\\\hline
$G_{13}$&\hex{0}&\hex{1}&\hex{2}&\hex{D}&\hex{4}&\hex{7}&\hex{F}&\hex{6}&\hex{8}&\hex{E}&\hex{C}&\hex{9}&\hex{5}&\hex{B}&\hex{A}&\hex{3}\\\hline
$G_{14}$&\hex{0}&\hex{1}&\hex{2}&\hex{D}&\hex{4}&\hex{7}&\hex{F}&\hex{6}&\hex{8}&\hex{E}&\hex{C}&\hex{B}&\hex{3}&\hex{9}&\hex{5}&\hex{A}\\\hline
$G_{15}$&\hex{0}&\hex{1}&\hex{2}&\hex{D}&\hex{4}&\hex{7}&\hex{F}&\hex{6}&\hex{8}&\hex{E}&\hex{C}&\hex{B}&\hex{9}&\hex{3}&\hex{A}&\hex{5}\\\hline
 	\end{tabular}
\end{table}

\subsection{Dealing with affine equivalence}
Our goal is to analyse the differential uniformity of each optimal s-box class, with respect to every alternative operation $\circ$ on $\mathbb B$. 
The definitions given above can be generalised in the obvious way setting
$\delta^\circ_f(a,b)=\#\{x \in \mathbb B \mid xf\circ (x\circ a)f=b\}$ and   calling \emph{$\circ$-differential uniformity} of $f$ the value
$\delta^\circ_f:=\max_{a\ne 0}\delta^\circ_f(a,b)$.

It is noteworthy that, unlike in the case of classic differential uniformity, the value of $\delta^\circ_f$ is not invariant under affine equivalence. However, verifying the $\circ$-differential uniformity of $g_2  G_i  g_1$ for any optimal class and every pair $g_1, g_2 \in \AGL(\mathbb B,+)$ would be impractical. 
Therefore, a reduction in the number of permutations to be checked is necessary, and for this purpose, we make the following observations.
First, similar to the classical case, the $\circ$-differential uniformity is preserved under affine transformations w.r.t.\ $\circ$.
\begin{proposition}
	\label{prop:diff_inv_ho}
 Given $f \in \sym(\mathbb B)$ and $g_1, g_2 \in  \AGL(\mathbb B, \circ)$ we have
	\[ \delta_{g_1  f  g_2}^{\circ}(a,b)= \delta_f^{\circ}(g_2(a),g_1^{-1}(b)). \]
\end{proposition}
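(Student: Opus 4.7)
My plan is to unfold the definition of $\delta^\circ_{g_1 f g_2}(a,b)$, perform a bijective change of variables through $g_2$, and transfer the output constraint through $g_1$ by exploiting $\circ$-affinity.

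Each $g_i \in \AGL(\mathbb B,\circ)$ decomposes as $g_i(x) = L_i(x)\circ c_i$, with $L_i \in \GL(\mathbb B,\circ)$ and $c_i \in \mathbb B$. Since $(\mathbb B,\circ)$ has exponent $2$, the two ingredients I need are
\[
g_2(x\circ a) = g_2(x)\circ L_2(a), \qquad g_1(u)\circ g_1(v) = L_1(u\circ v),
\]
where the first is simply the $\circ$-linearity of $L_2$ (the constant $c_2$ appears on both $g_2(x\circ a)$ and $g_2(x)$ and cancels on the right-hand side), and the second uses that the constant $c_1$ appears twice on the left and so cancels via $c_1\circ c_1=0$.

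Starting from
\[
\delta^\circ_{g_1 f g_2}(a,b) = \#\{x\in\mathbb B : g_1(f(g_2(x)))\circ g_1(f(g_2(x\circ a))) = b\},
\]
the substitution $y = g_2(x)$ is a bijection of $\mathbb B$ and, combined with the two identities above, rewrites the defining equation as
\[
L_1\bigl(f(y)\circ f(y\circ L_2(a))\bigr) = b, \qquad \text{equivalently} \qquad f(y)\circ f(y\circ L_2(a)) = L_1^{-1}(b).
\]
The number of such $y$ equals $\delta^\circ_f(L_2(a), L_1^{-1}(b))$. Identifying the action of $g_i$ on a $\circ$-difference with that of its $\circ$-linear part $L_i$---which is the only well-defined notion, since affine translations do not influence differences---this matches $\delta^\circ_f(g_2(a), g_1^{-1}(b))$, as claimed.

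The main obstacle is notational rather than mathematical: tracking the translation parts $c_1,c_2$ and confirming that they vanish thanks to $c\circ c=0$ for every $c\in\mathbb B$. Once this bookkeeping is in place, the entire argument collapses to a single change of variables. In contrast, the classical analogue for $+$-differential uniformity works equally well with arbitrary affine maps precisely because $+$ also has exponent $2$; here we rely on the same feature for the alternative operation.
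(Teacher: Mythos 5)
Your proof is correct and follows the standard change-of-variables argument that the paper leaves implicit (it states this proposition without proof; the same manipulation appears explicitly in the proof of the subsequent proposition, where the count is shown to transform as $\delta^\circ_{g'_2 f g'_1}(a,b)=\delta^\circ_{g_2 f g_1}(ah_2,bh_1^{-1})$). Your explicit handling of the affine constants is a worthwhile clarification: the identity as literally written, with $g_2(a)$ and $g_1^{-1}(b)$, holds exactly when $g_1,g_2$ are $\circ$-linear, and for genuinely affine maps one must read $g_2(a)$ and $g_1^{-1}(b)$ as the corresponding $\circ$-linear parts acting on the differences, which is indeed how the result is used later (with $h_1,h_2\in H_\circ\subset\GL(\mathbb B,+)\cap\GL(\mathbb B,\circ)$).
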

Moreover, Proposition \ref{prop:dim4alt} establishes that for any $\circ$ derived from a translation group in $\AGL(\mathbb B,+)$, the $+$-translations are affine with respect to $\circ$.
This initial observation allows us to narrow down the analysis to $g_2  G_i  g_1$ with $g_1, g_2 \in \GL(\mathbb B,+)$, which still remains impractical. Furthermore, considering that $\ho = \gl(\mathbb B, +) \cap \GL(\mathbb B, \circ)$, Proposition \ref{prop:diff_inv_ho} establishes that left and right multiplication by elements in $\ho$ preserves both $\circ$ and $+$-differential uniformity. 
It is noteworthy that during this process, the rows of the matrix containing all the $\delta^\circ_f(a,b)$ ($\mathrm{DDT}^\circ$) are merely shuffled, thereby preserving the highest element of each row. 
Therefore, the following conclusion can be easily obtained.
\begin{proposition}
    Let $g_1, g_2 \in \gl(\mathbb B,+)$ and $f\in \sym(\mathbb B)$.  For any $g_1' \in g_1 H_\circ $ and $g_2' \in H_\circ g_2$ we have \[\delta^\circ_{g_2 f g_1}=\delta^\circ_{g_2' f g_1'}.\]
\end{proposition}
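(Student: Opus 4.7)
The plan is to reduce the claim to a single application of Proposition~\ref{prop:diff_inv_ho}. First, I would use the coset hypotheses to write $g_1' = g_1 h_1$ and $g_2' = h_2 g_2$ for some $h_1, h_2 \in H_\circ$. Recalling that by definition $H_\circ = \GL(\mathbb B, +) \cap \GL(\mathbb B, \circ)$, both $h_1$ and $h_2$ lie in $\GL(\mathbb B, \circ) \subseteq \AGL(\mathbb B, \circ)$, so they are admissible choices for the affine factors required by Proposition~\ref{prop:diff_inv_ho}.

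Next, I would exploit the factorisation $g_2' f g_1' = h_2 \, (g_2 f g_1) \, h_1$ and apply Proposition~\ref{prop:diff_inv_ho} with outer $\circ$-affine factors $h_2$ on the left and $h_1$ on the right. This yields the pointwise identity
\[
\delta^\circ_{g_2' f g_1'}(a, b) = \delta^\circ_{g_2 f g_1}\bigl(h_1(a), h_2^{-1}(b)\bigr)
\]
for every $a, b \in \mathbb B$.

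Finally, I would take the maximum over $a \neq 0$ (and over $b$) on both sides. Because $h_1$ is a linear bijection of $\mathbb B$, it permutes $\mathbb B \setminus \{0\}$, and $h_2^{-1}$ is a bijection of $\mathbb B$; hence as $(a,b)$ ranges over $(\mathbb B \setminus \{0\}) \times \mathbb B$, so does $(h_1(a), h_2^{-1}(b))$. The two suprema therefore coincide, giving $\delta^\circ_{g_2' f g_1'} = \delta^\circ_{g_2 f g_1}$ as claimed.

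The argument is essentially a corollary of Proposition~\ref{prop:diff_inv_ho}, and no serious obstacle is to be expected. The only point that merits explicit mention is verifying that the coset multipliers sit inside the $\circ$-normaliser, which is built into the very definition of $H_\circ$ via the intersection with $\GL(\mathbb B, \circ)$; this is precisely what makes the two-sided $H_\circ$-coset invariance a clean consequence of the one-sided $\AGL(\mathbb B, \circ)$-invariance established earlier.
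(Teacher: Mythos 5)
Your proof is correct and is essentially the paper's own argument: the paper likewise writes $g_1'=g_1h_1$, $g_2'=h_2g_2$ and unwinds the $\circ$-linearity of $h_1,h_2$ to get $\delta^\circ_{g_2'fg_1'}(a,b)=\delta^\circ_{g_2fg_1}(ah_2,bh_1^{-1})$ before taking the maximum, which is just an inline expansion of the appeal to Proposition~\ref{prop:diff_inv_ho} that you make explicitly. (Your arguments appear as $h_1(a),h_2^{-1}(b)$ rather than $ah_2,bh_1^{-1}$ only because the paper states Proposition~\ref{prop:diff_inv_ho} in prefix notation while computing in postfix; since both maps are bijections fixing $0$, this does not affect the conclusion.)
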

\begin{proof}
\setcounter{footnote}{0}
  Take $h_1, h_2\in H_\circ$ such that $g'_1=g_1 h_1$ and $g_2'=h_2 g_2$. Then\footnote{We remind here that we use postfix notation in this paper, so that $x f$ denotes the image of $x$ under the function $f$.},
    \[
   x g'_2 f g_1' \circ  (x \circ a)g'_2 f g_1'= xh_2  g_2 f g_1 \circ  (xh_2 \circ ah_2g_2 f g_1)h_1,
    \]
   implying that $\delta^\circ_{g'_2 f g_1'}(a,b)=\delta^\circ_{g_2 f g_1}(ah_2,bh_1^{-1})$. So, $\delta^\circ_{g'_2 f g_1'}=\delta^\circ_{g_2 f g_1}$.\qed
\end{proof}

The final proposition allows us to focus solely on $g_1$ and $g_2$ within the left and right cosets of $H_\circ$. These reductions facilitate the analysis of the potential $\circ$-differential uniformities attainable across all classes of optimal permutations for the 105 conceivable alternative sums defined over $\mathbb B$.
For each of the 105 alternative operations, we systematically explored each of the 16 classes, following the described procedure, and we recorded the $\circ$-differential uniformity for every candidate. To streamline the presentation, we calculated the average across all 105 operations and presented the consolidated results in Tab.~\ref{tab:differential_avg}.

\begin{table}[h!]\label{tab:differential_avg}
	\centering
	\caption{Avg. number of functions with given $\circ$-differential uniformity}
	\label{tbl:10}
		{\footnotesize\begin{tabular}{c||c|c|c|c|c|c|c|c|}
		 %\backslashbox{Class}{$\delta^\circ$}
		 Class $\downarrow$ // $\delta^\circ \rightarrow$& 2 & 4    & 6    & 8    & 10  & 12 & 14 & 16 \\ \hline\hline
		$G_0$    & 0& 780& 6695& 2956& 359& 16& 0& 12   \\
		$G_1$    & 0& 682& 6927& 2823& 374& 0& 0& 12 \\
		$G_2$    & 0& 781& 6695& 2956& 359& 16& 0& 12  \\
		$G_3$    & 0& 896& 7566& 2210& 146& 0& 0& 0  \\
		$G_4$    & 0& 1104& 7770& 1825& 118& 0& 0& 0   \\
		$G_5$    & 0& 822& 7994& 1790& 212& 0& 0& 0  \\
		$G_6$    & 0& 1120& 7441& 2108& 150& 0& 0& 0 \\
		$G_7$    & 0& 898& 7628& 2139& 133& 20& 0& 0  \\
		$G_8$    & 0& 859& 6503& 3102& 296& 48& 0& 12 \\
		$G_9$    & 0& 1123& 7062& 2457& 141& 36& 0& 0  \\
		$G_{10}$ & 0& 1084& 7115& 2437& 147& 36& 0& 0 \\
		$G_{11}$ & 0& 1202& 7299& 2159& 157& 0& 0& 0  \\
		$G_{12}$ & 0& 1099& 7275& 2291& 153& 0& 0& 0  \\
		$G_{13}$ & 0& 916& 7749& 1965& 176& 12& 0& 0   \\
		$G_{14}$ & 0& 1122& 7100& 2400& 149& 48& 0& 0  \\
		$G_{15}$ & 0& 1122& 7100& 2400& 149& 48& 0& 0 \\
  \hline
	\end{tabular}}
\end{table}

In our examination, we observe that if, for a given operation $\circ$, certain elements within an affine equivalence class yield a $\circ$-differential uniformity $\delta$, then this value $\delta$ is achieved by some element in the entire class for all alternative operations.
Our analysis reveals that certain optimal functions may exhibit the highest differential uniformity (16) for alternative operations, specifically the classes $G_0$ (containing, e.g., the s-box S1 of Serpent~\cite{biham1998serpent}), $G_1$ (containing, e.g., the s-box of PRESENT~\cite{bogdanov2007present} and the s-box S2 of Serpent), $G_2$ (containing, e.g., the s-box S0 of Serpent), and $G_8$ (containing, e.g., the s-box of NOEKEON \cite{noekeon}). Conversely, the classes $G_3$, $G_4$, $G_5$, $G_6$, $G_{11}$, and $G_{12}$ demonstrate more favorable behavior concerning alternative operations.

\begin{remark}
    We observe from Table \ref{tab:differential_avg} that we do not obtain maps with {\(\circ\)-differential} uniformity equal to 2 or 14. This is because, with respect to the xor operation, such differential uniformities are not present for permutations in 4 bits \cite{leander2007classification}. The alternative operations \(\circ\) that we are considering induce a vector space structure on \(V\) isomorphic to \((V, +)\). Specifically, let \(\phi: (V, \circ) \to (V, +)\) be a vector space isomorphism. For any map \(f: V \to V\), the map \(\phi  f  \phi^{-1}\) will have \(\circ\)-differential uniformity equal to the classical differential uniformity of \(f\). This is due to the following relationship:
\[
(x \circ a) \phi  f  \phi^{-1} \circ (x) \phi  f  \phi^{-1} = ((x \phi + a \phi) f + (x \phi) f) \phi^{-1}.
\]\end{remark}
\section{Experiments on a 16-bit block cipher with 4-bit s-boxes}
\label{sec:experiments}
\setcounter{footnote}{1}
In this concluding section, we aim to apply the results obtained above to a family of (toy) ciphers\footnote{The implementation of our simulations is made available here: \url{https://github.com/KULeuven-COSIC/alternative-differential-cryptanalysis}}. These ciphers may exhibit security under classical differential cryptanalysis but reveal vulnerabilities to the alternative differential approach.

In our experiments, we set $V = \F_2^{16}$, $n=4$, and $s=4$, defining $\circ$ as the parallel sum by applying the alternative operation defined by the vector $\mathbf{b} = (0, 1)$ to each $4$-bit block.
Moreover, all our ciphers will feature the 4-bit permutation $\gamma :\F_2^4 \rightarrow \F_2^4$ defined by the  sequence 
 (\hex{0}, \hex{E}, \hex{B}, \hex{1}, \hex{7}, \hex{C}, \hex{9}, \hex{6}, \hex{D}, \hex{3}, \hex{4}, \hex{F}, \hex{2}, \hex{8}, \hex{A}, \hex{5}) as its s-box.
Precisely, four copies of $\gamma$ will act on the 16-bit block. Notice that the s-box $\gamma \in $ belongs to $G_0$ and has 
$\delta_\gamma=4$ and $\delta^\circ_\gamma=16$.

In all the experiments described below, we consider the SPN whose $i$-th round
is obtained by the composition of the parallel application of the s-box $\gamma$ on every
4-bit block, a `diffusion layer' $\lambda$ sampled randomly from $H_\circ$, and the xor with the
$i$-th random round key. We study the difference propagation in the cipher in a
long-key scenario, i.e., the key-schedule selects a random long key $k \in \mathbb F_2^{16r}$ where
$r$ is the number of rounds. To avoid potential bias from a specific key choice,
we conduct our experiments by averaging over $2^{15}$ random long-key generations.
This approach gives us a reliable estimate of the expected differential probability
for the best differentials in this cipher.

In 150 distinct executions, spanning a range of rounds from 3 to 10, we calculated the discrepancy between the most effective $\circ$-trail and $+$-trail. To manage computational resources, our focus was narrowed down to input differences with a Hamming weight of 1.

The results are depicted in Fig.~\ref{fig:diffplot}, where each dot represents an individual simulation. The $x$ axis corresponds to the negative logarithm of the probability of the best $\circ$ differential, while the $y$ axis represents the difference between that value and the negative logarithm of the probability of the best $+$ differential. Darker dots indicate a higher number of rounds, as explained in the legend. Notably, about half of the dots lie above zero, suggesting that the best $\circ$ differential consistently outperforms the best $+$ differential until they become indistinguishable. Interestingly, this convergence often occurs when the $\circ$ probability is already very close to $2^{-16}$, providing potential candidates for our distinguisher attack. 

\begin{figure}[h!]
    \caption{Comparison of $+$ and $\circ$ trails for random mixing layers}
    \label{fig:diffplot}
    \centering
    \includegraphics[scale=0.70]{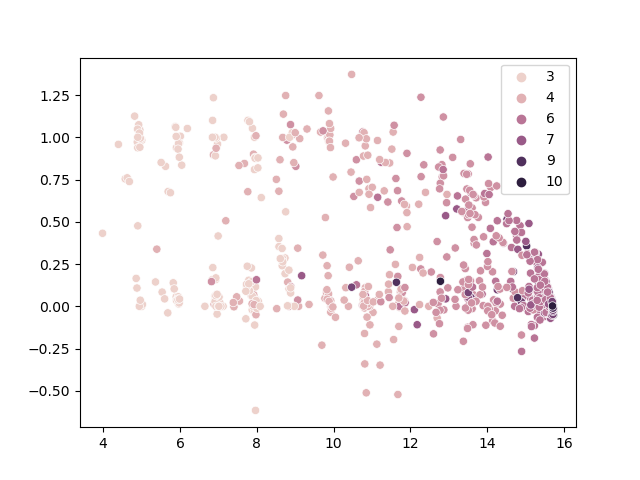}
\end{figure}

\subsubsection*{Acknowledgments.}
M. Calderini and R. Civino are members of INdAM-GNSAGA (Italy) and thankfully acknowledge support by MUR-Italy via PRIN 2022RFAZCJ `Algebraic methods in cryptanalysis'. 
R.\ Civino is supported by the Centre of EXcellence on Connected, Geo-Localized and 
 Cybersecure Vehicles (EX-Emerge), funded by Italian Government under CIPE resolution n.\ 70/2017 (Aug.\ 7, 2017). 
R.~Invernizzi is supported by the European Research Council (ERC) under the European Union’s Horizon 2020 research and innovation programme (grant agreement ISOCRYPT - No. 101020788) and by CyberSecurity Research Flanders with reference number VR20192203.\\
\subsubsection*{Author Contribution information.}
All authors contributed to writing this manuscript and all authors read and approved the submitted version.
\subsubsection*{Competing Interest information.}
The authors have no financial or proprietary interests in any material discussed in this article.

%%% ===============================================================================
%%% Bibliography
%%% ===============================================================================
%\newpage

\renewcommand{\bibsection}{\section*{References}} % requried for natbib to have "References" printed and as section*, not chapter*
% Use natbib compatbile splncs04nat style.
% It does provide all features of splncs04.bst, but is developed in a clean way.
% Source: https://github.com/tpavlic/splncs04nat

\bibliographystyle{splncs04nat}
\bibliography{refs}{}

%\bibliographystyle{splncs04nat}
%\begingroup
%  \microtypecontext{expansion=sloppy}
%  \small % ensure correct font size for the bibliography
%  \bibliography{paper}
%\endgroup

\newpage
\appendix

\section{The case $d = s-3$}
\label{sec:n3}
A natural next step is to extend the characterization of Theorem \ref{thm:caratt_ho_semplice} to higher-dimensional $\wo$. In this section we will consider a new operation $\circ$ such that $d = \dim(\Wo) = s-3$. Notice that the following results do not apply in the case $s=4$ and require larger values of the s-box size. Thanks to Theorem \ref{thm:theta}, its defining matrix can be written as 
\[
\Theta_\circ = 
\begin{pmatrix}
	\textbf{0} & \b_{21} & \b_{31} \\
	\b_{21} & \textbf{0} & \b_{32} \\
	\b_{31} & \b_{32} & \textbf{0}
\end{pmatrix}
\]
where no $\F_2$-linear combination of columns the null vector. The multiplication matrices are
\[ 
M_{e_1} = 
\left(\begin{array}{@{}c|c@{}}
	\ind{3} &
	\begin{matrix}
		\textbf{0} \\
		\b_{21} \\
		\b_{31}
	\end{matrix}
	\\
	\hline
	\mathbb{0}_{s-3, 3} &
	\ind{s-3}
\end{array}\right),
\ M_{e_2} = 
\left(\begin{array}{@{}c|c@{}}
	\ind{3} &
	\begin{matrix}
		\b_{21} \\
		\textbf{0} \\
		\b_{32}
	\end{matrix}
	\\
	\hline
	\mathbb{0}_{s-3, 3} &
	\ind{s-3}
\end{array}\right),
\ M_{e_3} = 
\left(\begin{array}{@{}c|c@{}}
	\ind{3} &
	\begin{matrix}
		\b_{31} \\
		\b_{32} \\
		\textbf{0}
	\end{matrix}
	\\
	\hline
	\mathbb{0}_{s-3, 3} &
	\ind{s-3}
\end{array}\right)
\]
while as always $M_{e_j} = \ind{s}$ for $j > 3$. In this case we have to deal with multiple error vectors, namely
\[
e_i \cdot e_j = e_j \cdot e_i = (0, 0, 0, \b_{ij}).
\]
It is hence convenient to introduce the notation $u_{ij} = u_{ji} \defeq e_i \cdot e_j = (0, 0, 0, \b_{ij})$.
\begin{proposition}
	Let $\circ$ be an operation such that $\dim(\wo) = s-3$. Then $\dim(\uo) \in \{2, 3\}$.
\end{proposition}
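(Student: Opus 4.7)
The plan is to combine two things stated earlier: the explicit description of $U_\circ$ in terms of the defining vectors given by Proposition \ref{prop:info_prod}, and the column-independence requirement for $\Theta_\circ$ imposed by Theorem \ref{thm:theta}.

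First I would observe that, by Proposition \ref{prop:info_prod}, any element of $U_\circ$ is of the form $(0,0,0,\mathbf{w})$ where $\mathbf{w} \in \F_2^{s-3}$ is an $\F_2$-linear combination of the defining vectors $\mathbf{b}_{21}, \mathbf{b}_{31}, \mathbf{b}_{32}$. Hence
\[
\dim(U_\circ) \;=\; \dim_{\F_2} \spn\{\mathbf{b}_{21}, \mathbf{b}_{31}, \mathbf{b}_{32}\},
\]
which immediately yields the upper bound $\dim(U_\circ) \leq 3$.

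Next I would rule out the cases $\dim(U_\circ) \in \{0,1\}$ by contradiction with Theorem \ref{thm:theta}. If all three $\mathbf{b}_{ij}$ vanish, then $\Theta_\circ = 0$ and the column-independence condition fails. If they all lie on a single line $\F_2 \mathbf{v}$ with $\mathbf{v}\neq \mathbf 0$, write $\mathbf{b}_{ij} = c_{ij}\mathbf{v}$ with $c_{ij} \in \F_2$, not all zero. An $\F_2$-combination $x_1 C_1 + x_2 C_2 + x_3 C_3$ of the three columns of $\Theta_\circ$ is zero if and only if the scalar system
\[
\begin{pmatrix} 0 & c_{21} & c_{31} \\ c_{21} & 0 & c_{32} \\ c_{31} & c_{32} & 0 \end{pmatrix}
\begin{pmatrix} x_1 \\ x_2 \\ x_3 \end{pmatrix}
=
\begin{pmatrix} 0 \\ 0 \\ 0 \end{pmatrix}
\]
admits $(x_1,x_2,x_3)$ as solution. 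The coefficient matrix is symmetric with zero diagonal, so a direct cofactor expansion gives determinant $2\, c_{21} c_{31} c_{32}$, which vanishes over $\F_2$. A nontrivial solution therefore always exists, producing an $\F_2$-linear dependence among the columns of $\Theta_\circ$ and contradicting Theorem \ref{thm:theta}.

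The argument is genuinely short, and I expect the only real conceptual point to be the reduction: once one recognises that the 1-dimensional case collapses to asking whether a symmetric zero-diagonal $3\times 3$ matrix over $\F_2$ can be nonsingular, the answer is negative in characteristic two and both forbidden cases are eliminated by the same contradiction. The computational part is routine and no case analysis beyond the two dichotomies $\dim = 0$ and $\dim = 1$ is required.
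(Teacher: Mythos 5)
Your proof is correct and follows essentially the same route as the paper: identify $\uo$ with the span of $\b_{21},\b_{31},\b_{32}$ via Proposition \ref{prop:info_prod} for the upper bound, and invoke the column condition of Theorem \ref{thm:theta} for the lower bound. In fact you supply a detail the paper only asserts — that $\dim(\uo)\le 1$ forces a nontrivial $\F_2$-dependence among the columns of $\Theta_\circ$, which you justify cleanly by noting that a symmetric zero-diagonal $3\times 3$ matrix has determinant $2c_{21}c_{31}c_{32}=0$ in characteristic two.
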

\begin{proof}
	Thanks to Proposition \ref{prop:info_prod}, $\uo$ is made of all the vectors $w \in \wo$ whose last $d$ components are all the possible $\F_2$-linear combinations of the vectors $\b_{ij}$. By Theorem \ref{thm:theta}, at least two of these vectors are granted to be independent, hence $\dim(\uo) \geq 2$. Moreover, they are all spanned by $u_{12},u_{13}$ and $u_{23}$, hence $\dim(\uo) \leq 3$.\qed
\end{proof}

\begin{theorem}
    \label{thm:caratt_ho_n3}
    Let $\circ$ be an operation such that $\dim(\wo) = s-3$, defined by $\Theta_\circ$ as above. Let $\lb \in (\F_2)^{s\times s}$. Then $\lb \in \ho$ if and only if we can write
    $$ 
    \lb = 
    \begin{pmatrix}
        A & B \\
        \mathbb{0}_{d,3} & D
    \end{pmatrix}
    $$
    for some $A \in \GL((\F_2)^3, +)$, $D \in \GL((\F_2)^d, +)$, and $B \in (\F_2)^{3 \times d}$ such that the three \emph{compatibility equations}
    \[
        \b_{ij}D = \Sum_{k_1, k_2 =1}^3 (A_{ik_1}A_{jk_2} + A_{ik_2}A_{jk_1})\b_{k_1k_2}
    \]		
    for $1 \leq i,j \leq 3$, $i \neq j$ are satisfied, where $A_{ij}$ is the entry of $A$ in the $i$-th row and $j$-th column.
\end{theorem}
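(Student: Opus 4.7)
The plan is to follow the strategy of Theorem~\ref{thm:caratt_ho_semplice}: by the remark following the definition of $H_\circ$, a map $\lb\in\GL(V,+)$ belongs to $H_\circ$ precisely when it is an algebra automorphism of $(V,+,\cdot)$, so the task reduces to understanding how $\lb$ interacts with the product $\cdot$ on the canonical basis.

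First I would invoke Lemma~\ref{lem:spazi_fissi} to obtain $\Wo\lb=\Wo$. Since $\Wo=\spn\{e_4,\dots,e_s\}$ in the present setting, this forces the lower-left $d\times 3$ block of $\lb$ to vanish, so $\lb$ takes the block-upper-triangular form
\[
\lb=\begin{pmatrix} A & B \\ \mathbb{0}_{d,3} & D \end{pmatrix}
\]
with $A\in(\F_2)^{3\times 3}$, $D\in(\F_2)^{d\times d}$, $B\in(\F_2)^{3\times d}$; the invertibility of $\lb$ then forces both $A$ and $D$ to be invertible, and no constraint arises on $B$.

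The core step is reducing the identity $(x\cdot y)\lb=x\lb\cdot y\lb$ to the pairs $(e_i,e_j)$ with distinct $i,j\in\{1,2,3\}$. Both sides are bilinear in $(x,y)$ by Proposition~\ref{prop:info_prod} and linearity of $\lb$, so it suffices to test on basis pairs. By the same proposition $x\cdot y$ only depends on the first three components of its arguments, so both sides vanish whenever $x$ or $y$ lies in $\Wo$ (on the right because $\Wo\lb=\Wo$ and $\Wo\cdot V=\{0\}$). The diagonal cases reduce to $0=0$ since $e_i\cdot e_i=0$, so only the three off-diagonal pairs inside $\{e_1,e_2,e_3\}$ matter. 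For such a pair the left-hand side equals $u_{ij}\lb=(0,0,0,\b_{ij}D)$, because $u_{ij}\in\Wo$ and the lower block of $\lb$ is $D$. For the right-hand side I would write $e_i\lb$ as the $i$-th row of $\lb$, whose strong part is $(A_{i1},A_{i2},A_{i3})$, and then use bilinearity together with $e_k\cdot e_l=u_{kl}$ to expand $e_i\lb\cdot e_j\lb$: only the entries of $A$ appear (the $B$-part lies in $\Wo$ and is annihilated by $\cdot$), and matching the last $d$ coordinates against the left-hand side produces exactly the three compatibility equations of the statement.

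The converse is then a reversal of the same calculation: the block form together with the three compatibility equations guarantee $(e_i\cdot e_j)\lb=e_i\lb\cdot e_j\lb$ for every off-diagonal pair in $\{e_1,e_2,e_3\}$, and bilinearity combined with the vanishing on $\Wo$ propagates the identity to the whole of $V\times V$. The only genuine obstacle is the combinatorial bookkeeping in characteristic two: unlike Theorem~\ref{thm:caratt_ho_semplice}, where a single defining vector $\b$ collapses the constraint to $\b D=\b$, here the three independent vectors $\b_{12},\b_{13},\b_{23}$ are coupled through the entries of $A$, and one must carefully handle the symmetries $\b_{k_1 k_2}=\b_{k_2 k_1}$ and $\b_{kk}=\textbf{0}$ in order to identify correctly the coefficient of each $\b_{k_1 k_2}$ on the right-hand side.
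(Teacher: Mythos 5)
Your proposal is correct and follows essentially the same route as the paper's own proof: Lemma~\ref{lem:spazi_fissi} to force the block-triangular shape, bilinearity of $\cdot$ to reduce the automorphism condition to the off-diagonal basis pairs in $\{e_1,e_2,e_3\}$, and matching the last $d$ coordinates to extract the compatibility equations, with the converse obtained by reversing the computation. (One minor caveat in your closing remark: the vectors $\b_{12},\b_{13},\b_{23}$ need not be linearly independent, since $\dim(U_\circ)$ may equal $2$; this does not affect the argument.)
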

\begin{proof}

Let us begin with $\lb \in \ho$. By Lemma \ref{lem:spazi_fissi}, the first three components of the last $d$ rows must be zero, as they would otherwise map elements from $\wo$ outside of $\wo$. Furthermore, $\lb \in \gl((\F_2)^s, +)$ implies that both $A \in \GL((\F_2)^3, +)$ and $D \in \GL((\F_2)^d, +)$ are invertible. For any $i, j$, we can express $u_{ij} \lb$ as $e_i \lb \cdot e_j\lb$. This gives us:
\[
e_i \lb = \Sum_{k=1}^3 A_{ik} e_k + \Sum_{k=1}^{n-3} B_{ik} e_{k+3}
\]
and similarly for $e_j$. Since the dot product is distributive over $+$, we can disregard the second sum, which consists only of weak vectors. Thus, we obtain:
\[
e_i \lb \cdot e_j \lb = \left(\Sum_{k_1=1}^3 A_{ik_1} e_{k_1}\right) \cdot \left(\Sum_{k_2=1}^3 A_{jk_2} e_{k_2}\right) = \Sum_{k_1, k_2=1}^3 \left( A_{ik_1}A_{jk_2} + A_{ik_2}A_{jk_1} \right) u_{ij}.
\]
Note that these equations reduce to zero when $i = j$; since $u_{ii} = 0$, we can focus on the case $i \neq j$. We know that the first three components of these products are zero, so we restrict our attention to the component in $\wo$ on both sides. The above expression simplifies to:
\[
\b_{ij}D = \Sum_{k_1, k_2=1}^3 (A_{ik_1} A_{jk_2} + A_{ik_2} A_{jk_1}) \b_{k_1 k_2}
\]
for $1 \leq i,j \leq 3$, $i \neq j$.

Now, suppose $\lb$ is in the form prescribed by the theorem; we aim to prove that $\lb \in \ho$. The zero block and the invertibility of $A$ and $D$ are sufficient to show that $\lb \in \gl(V, +)$. As in Theorem \ref{thm:caratt_h0_msomme}, it suffices to show that $(e_i \cdot e_j) \lb = e_i \lb \cdot e_j \lb$. If at least one of $e_i, e_j \in \wo$ (say $e_i$), then the left-hand side is zero. Furthermore, the zero block in the lower left of the matrix ensures that $e_i \lb \in \wo$, which implies the right-hand side is also zero. We are left with the cases where $e_i, e_j \not\in \wo$. If $i = j$, we have $0 = 0$ by the definition of the dot product. The three cases where $i \neq j$ correspond exactly to the three compatibility equations, as shown in the first part of the proof. \qed
\end{proof}

We will now present two examples of how Theorem \ref{thm:caratt_ho_n3} allows us to count the number of elements in $\ho$. These computations will be important in proving Theorem \ref{thm:coniugio_n3}.

\begin{example}
	\label{ex:n6_d3_u3}
		Let $V = (\F_2)^6$, and let $\circ$ be an operation where $\dim(\wo) = \dim(\uo) = s-3 = 3$. By applying Theorem \ref{thm:caratt_ho_n3}, we can determine the size of $\ho$. Specifically, $A$ can be any matrix from $\gl((\F_2)^3)$, providing 168 possible choices. Note that $A$ defines the image of $e_i$ for $1 \leq i \leq 3$, and thus determines the image of $\uo$. Since in this case $\uo = \wo$, the compatibility equations uniquely determine $D$ for each selected $A$ (which is not always the case: for each $A$, any $D$ that fixes $\uo$ and permutes the remaining vectors in a basis of $\wo$ can be chosen). Additionally, $B$ can be any $3 \times 3$ matrix, giving us 512 options. Therefore, the total size of $\ho$ is 86016.
\end{example}

\begin{example}
\label{ex:n6_d3_u2}
Let $V = (\F_2)^6$, and let $\circ$ be an operation where $\dim(\wo) = 3$ and $\dim(\uo) = 2$. In this case, the choice of $A$ is more restricted: the error vectors $u_{12}, u_{13},$ and $u_{23}$ are not independent, and their relations must be preserved by $\lb$. For some choices of $A$, the compatibility equations become unsolvable. Let us consider, for example, the operation $\circ$ defined by
\[
\Theta_\circ =
\begin{pmatrix}
000 & 101 & 110 \\
101 & 000 & 101 \\
110 & 101 & 000
\end{pmatrix}.
\]
Here, we have $\b_{12} = \b_{23} = (101)$. Now, take the matrix
\begin{equation*}
A =
\begin{pmatrix}
1 & 1 & 0 \\
1 & 1 & 1 \\
0 & 1 & 0
\end{pmatrix},
\end{equation*}
which is invertible but not a valid choice. Let us show why. Assuming $B = 0$ for simplicity, we get
\[
e_1 \lb = e_1 + e_2, \quad e_2 \lb = e_1 + e_2 + e_3, \quad e_3 \lb = e_2,
\]
which leads to
\[
\b_{12}D = e_1\lb \cdot e_2\lb = (e_1 \cdot e_3) + (e_2 \cdot e_3) = \b_{13} + \b_{23} = (011),
\]
\[
\b_{23}D = \b_{12} + \b_{23} = (000).
\]
However, this is impossible since we had set $\b_{12} = \b_{23}$.

When $\dim(\uo) = 2$, it is more practical to start by choosing $D$. If we fix a basis for $\uo$, we have two vectors whose images must span $\uo$ itself; since $\uo$ contains 3 pairwise independent vectors, there are 6 ways to choose them. We can then extend this to form a basis for $\wo$, and the image of the last vector can be any vector in $\wo \backslash \uo$, giving us 4 options. Thus, there are 24 possible choices for $D$.

The choice of $D$ determines the preimage of $\uo$ through the compatibility equations, which has dimension 2, leaving us with 4 possible choices for $A$, one for each way to map the last vector that does not appear in the equations. Combined with the 512 choices for $B$, we find that $\ho$ has a total of 49152 elements.

\end{example}
We will now show that this is a consequence of a deeper fact, i.e. the conjugacy classes of operations with $d = s-3$ only depends on $\dim(\uo)$.
\begin{theorem}
	\label{thm:coniugio_n3}
	Let $\circ$ and $\diamond$ be two operations such that $\dim(\wo) = \dim(W_\diamond) = s-3$, and let $T_\circ$ and $T_\diamond$ be respectively the associated translation groups. Then, there exists $g \in \GL(V)$ such that $T_\diamond = T_\circ^g$ if and only if $\dim(U_\circ) = \dim(U_\diamond)$.
\end{theorem}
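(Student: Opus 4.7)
Suppose first that $T_\diamond = g T_\circ g^{-1}$ for some $g\in\GL(V,+)$. Writing $\tau_x^\circ = M_x^\circ\sigma_x$ and computing in postfix notation, the condition $g^{-1}\tau_x^\circ g\in T_\diamond$ for every $x\in V$ is equivalent to
\[
(x\circ y)g \;=\; (xg)\diamond(yg) \qquad\text{for all } x,y\in V.
\]
Hence the map $x\mapsto xg$ is an $\F_2$-linear group isomorphism $(V,\circ)\to(V,\diamond)$, and therefore an isomorphism of the nilpotent $\F_2$-algebras of Proposition~\ref{prop:info_prod}. Any such isomorphism must send $U_\circ$ onto $U_\diamond$, proving $\dim U_\circ = \dim U_\diamond$. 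Moreover the computation is reversible: every $\F_2$-linear algebra isomorphism $(V,+,\cdot_\circ)\to(V,+,\cdot_\diamond)$ gives rise to a conjugation $T_\diamond = g T_\circ g^{-1}$.

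\textbf{Reverse direction, setup.} For the converse, assuming $\dim U_\circ = \dim U_\diamond =: r\in\{2,3\}$, it suffices to exhibit such an algebra isomorphism. Since $W_\circ$ consists of weak keys, Proposition~\ref{prop:info_prod} yields $W_\circ\cdot V=0$; combined with $x\cdot x=0$, this means $\cdot_\circ$ descends to a surjective alternating bilinear map
\[
\bar\pi_\circ\colon \wedge^2(V/W_\circ)\longrightarrow U_\circ,
\]
and analogously for $\diamond$. Because $d=s-3$, both $V/W_\circ$ and $V/W_\diamond$ are $3$-dimensional. The plan is to produce $\F_2$-linear isomorphisms $\bar g\colon V/W_\circ\to V/W_\diamond$ and $\hat g\colon U_\circ\to U_\diamond$ satisfying $\hat g\circ\bar\pi_\circ = \bar\pi_\diamond\circ\wedge^2\bar g$; once these exist, lift $\bar g$ to any linear bijection between chosen complements of $W_\circ$ and $W_\diamond$, and extend $\hat g$ by any linear bijection between chosen complements of $U_\circ$ in $W_\circ$ and of $U_\diamond$ in $W_\diamond$. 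Since the product vanishes as soon as a factor lies in $W_\circ$, the resulting $g\in\GL(V,+)$ is automatically an algebra isomorphism, hence the sought conjugator.

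\textbf{Main obstacle and resolution.} The crux is thus the existence of $\bar g$ and $\hat g$. This reduces to the transitivity of $\GL_3(\F_2)\times\GL_r(\F_2)$, acting by change of basis at source and target, on linear surjections $\F_2^3\to\F_2^r$; this is clear for $r=3$ and follows from row/column reduction for $r=2$, \emph{provided} that the natural action of $\GL(V/W_\circ)$ on $\wedge^2(V/W_\circ)\cong\F_2^3$ realises the full $\GL_3(\F_2)$. This is the only delicate point: the induced homomorphism
\[
\wedge^2\colon \GL_3(\F_2)\longrightarrow \GL(\wedge^2\F_2^3)\cong\GL_3(\F_2)
\]
has trivial kernel (an element fixing every $e_i\wedge e_j$ must preserve every coordinate $2$-plane, hence every coordinate line, hence equal the identity), and since both groups have order $168$ it is therefore an isomorphism. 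Applying it gives $\bar g$ realising any prescribed $\GL_3$-action on $\wedge^2(V/W_\circ)$, from which $\hat g$ is uniquely determined on the image of $\bar\pi_\circ$, completing the construction and the proof. A direct cross-check against Examples~\ref{ex:n6_d3_u3} and~\ref{ex:n6_d3_u2}, where the orders $|H_\circ|$ depend only on $\dim U_\circ$, provides a satisfying consistency test for the conjugacy classification just obtained.
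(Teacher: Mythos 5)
Your proof is correct, but it takes a genuinely different route from the paper's. Both directions hold: conjugation of $T_\circ$ onto $T_\diamond$ by a $+$-linear map is equivalent to a $+$-linear isomorphism of the algebras $(V,+,\cdot)$, which must carry $\uo$ onto $U_\diamond$; and conversely your construction of such an isomorphism --- factoring the product through a surjection $\wedge^2(V/\wo)\to\uo$ from a three-dimensional exterior square, using that $\wedge^2\colon\GL_3(\F_2)\to\GL(\wedge^2\F_2^3)$ is an isomorphism (your trivial-kernel argument is sound; equivalently, in dimension three $\wedge^2 A$ is the cofactor matrix $(A^{-1})^T$), and then applying rank normal form to match the two surjections $\F_2^3\to\F_2^r$ --- is complete, since $\wo\cdot V=V\cdot\wo=0$ guarantees that the lifted $g$ is automatically multiplicative. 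The paper instead reduces to the case $s=6$, $d=3$ by observing that the multiplication matrices are supported on the first six coordinates, and then quotes the external classification of Calderini et al.\ (exactly two conjugacy classes of such translation groups for $s=6$, $d=3$), using the orders $|H_\circ|=86016$ and $49152$ computed in Examples~\ref{ex:n6_d3_u3} and~\ref{ex:n6_d3_u2} together with Lemma~\ref{lem:coniugio_ho} to identify those classes with $\dim(\uo)=3$ and $\dim(\uo)=2$. The paper's argument buys brevity at the price of relying on tabulated data and on the example computations; yours is self-contained and constructive, exhibits the conjugating element explicitly, and explains conceptually \emph{why} $\dim(\uo)$ --- the rank of the induced map on the exterior square --- is a complete invariant, which also hints at how the analysis could be continued for larger codimension, where $\wedge^2(V/\wo)$ is no longer three-dimensional and the invariant theory becomes richer. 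One minor presentational point: when extending $\hat g$ by a bijection between complements of $\uo$ in $\wo$ and of $U_\diamond$ in $W_\diamond$, it is worth noting explicitly that such a bijection exists because the two complements have equal dimension $s-3-r$.
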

\begin{proof}
Up to conjugation, assume that $W_\circ$ and $W_\diamond$ are generated by $\{e_4,\dots ,e_s\}$, and that $U_\circ$ and $U_\diamond$ are generated by $\{e_4,e_5\}$ if $\dim(U_\circ) = 2$, or by $\{e_4,e_5,e_6\}$ if $\dim(U_\circ) = 3$. The multiplication matrices associated with $e_i$ for $1 \leq i \leq 3$ can be expressed as

\[
\begin{pmatrix}
M^\circ_i & \mathbb{0}_{6,s-6} \\
\mathbb{0}_{s-6,6} & \mathbb{I}_{s-6}
\end{pmatrix}, \quad
\begin{pmatrix}
M^\diamond_i & \mathbb{0}_{6,s-6} \\
\mathbb{0}_{s-6,6} & \mathbb{I}_{s-6}
\end{pmatrix}
\]
respectively. Here, $M^\circ_i$ and $M^\diamond_i$ are $6 \times 6$ matrices because the rows of the multiplication matrices generate $\uo$, as stated in Proposition \ref{prop:info_prod}, and are zero in the last $s-6$ components. We can thus view this sum as consisting of two parallel sums: the first one acting on the first six components and defined by $M^\circ_i$ and $M^\diamond_i$, and the second one acting on the last $s-6$ components as the standard sum.

Now, let $\bar{T}_\circ$ and $\bar{T}_\diamond$ be the translation groups associated with the alternative operations defined by $M^\circ_i$ and $M^\diamond_i$. Based on the classification in Calderini et al.~\cite[Table 1]{calderini2021properties}, we know that for $s=6$ and $d=3$, there are two distinct conjugacy classes of translation groups. Examples \ref{ex:n6_d3_u2} and \ref{ex:n6_d3_u3} demonstrate that these two classes correspond to $\dim(\uo) = 2$ and $\dim(\uo) = 3$. By Lemma \ref{lem:coniugio_ho}, the conjugation on $T_\circ$ induces conjugation on $\ho$, and conjugation must preserve the number of elements. This implies that, if and only if $\dim(\bar{U}_\circ) = \dim(\bar{U}_\diamond)$, there exists $\bar{g} \in \GL_6(V)$ such that $\bar{T}_\diamond = \bar{T}_\circ^{\bar{g}}$. Taking
\[
g =
\begin{pmatrix}
\bar{g} & \mathbb{0}_{6,s-6} \\
\mathbb{0}_{s-6,6} & \mathbb{I}_{s-6}
\end{pmatrix}
\]
we have $T_\diamond = T_\circ^g$ as required. \qed
\end{proof}
\begin{corollary}
    Given two sum $\circ$ and $\diamond$ with $\dim(W_\circ)=\dim(W_\diamond)=s-3$, the corresponding groups $H_\circ$ and $H_\diamond$ are conjugated by $g \in \GL(V)$ if and only if $\dim(U_\circ) = \dim(U_\diamond)$.
\end{corollary}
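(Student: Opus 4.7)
The plan is to derive the corollary as a combination of Theorem \ref{thm:coniugio_n3}, Lemma \ref{lem:coniugio_ho}, and a cardinality argument based on Theorem \ref{thm:caratt_ho_n3}. The backward implication is essentially free, while the forward one requires a counting step that is built by reduction to the six-dimensional case already handled in the preceding examples.

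For the implication $(\Leftarrow)$, I would argue as follows. Assume $\dim(U_\circ)=\dim(U_\diamond)$. By Theorem \ref{thm:coniugio_n3} there exists $g\in\GL(V)$ such that $T_\diamond=T_\circ^g$. An immediate application of Lemma \ref{lem:coniugio_ho} then yields $H_\diamond=H_\circ^g$, which is precisely the required conjugacy.

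For the implication $(\Rightarrow)$, suppose $H_\diamond=H_\circ^g$ for some $g\in\GL(V)$. Since conjugation preserves cardinality, $|H_\circ|=|H_\diamond|$. The strategy is to show that $|H_\circ|$ depends on $\dim(U_\circ)\in\{2,3\}$ and that the two resulting values are different, so that equality of cardinalities forces equality of the dimensions. To this end, I would mimic the reduction used in the proof of Theorem \ref{thm:coniugio_n3}: up to conjugation in $\GL(V,+)$, one can assume that $\circ$ and $\diamond$ act as standard sums on the last $s-6$ coordinates and nontrivially only on the first six. In this decomposition Theorem \ref{thm:caratt_ho_n3} shows that the compatibility equations constraining the blocks $A$ and $D$ uncouple along the two summands, so that $|H_\circ|$ factors as a product of the contribution of the nontrivial six-dimensional block and a contribution depending only on the trivial $s-6$-dimensional block. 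The latter is the same for $\circ$ and $\diamond$, while the former is exactly the count performed in Examples \ref{ex:n6_d3_u3} and \ref{ex:n6_d3_u2}, giving $86016$ when $\dim(U)=3$ and $49152$ when $\dim(U)=2$. Since these two numbers differ, $|H_\circ|=|H_\diamond|$ forces $\dim(U_\circ)=\dim(U_\diamond)$.

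The main obstacle is the bookkeeping step in the $(\Rightarrow)$ direction: one must carefully verify that, after the block reduction, the counting of admissible triples $(A,B,D)$ in Theorem \ref{thm:caratt_ho_n3} indeed splits as a product so that the $s=6$ computations from the two examples can be invoked verbatim. Concretely, this amounts to checking that the compatibility relations $\mathbf{b}_{ij}D=\sum(A_{ik_1}A_{jk_2}+A_{ik_2}A_{jk_1})\mathbf{b}_{k_1k_2}$ involve only the restriction of $D$ to the six-dimensional subspace carrying $U_\circ$, while the remaining $s-6$ weak-key directions are constrained only by the invertibility of $D$ and by the freedom in $B$. Once this splitting is in place, the forward direction reduces cleanly to the two examples and the result follows.
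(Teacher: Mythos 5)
Your proposal is correct, and for the ``if'' direction it coincides exactly with the paper's (one-line) proof: compose Theorem~\ref{thm:coniugio_n3} with Lemma~\ref{lem:coniugio_ho}. Note that this composition, read literally, only yields that direction, since Lemma~\ref{lem:coniugio_ho} transfers conjugacy from $T_\circ$ to $H_\circ$ and not conversely; your cardinality argument for the ``only if'' direction is the natural completion, and it is exactly the invariant the paper itself uses inside the proof of Theorem~\ref{thm:coniugio_n3} (via the sizes $86016$ and $49152$ of Examples~\ref{ex:n6_d3_u3} and~\ref{ex:n6_d3_u2}). The one point where your bookkeeping is loose is the claim that the count splits so that the $s=6$ numbers can be invoked \emph{verbatim}. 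What is independent of $s$ is the number of admissible blocks $A$ together with the resulting prescription of $D$ on $U_\circ$ coming from the compatibility equations of Theorem~\ref{thm:caratt_ho_n3} ($168$ admissible $A$ when $\dim(U_\circ)=3$, $24$ when $\dim(U_\circ)=2$), and the factor $2^{3(s-3)}$ for $B$; but the number of ways to extend the prescribed map on $U_\circ$ to an invertible $D$ on the whole of $W_\circ$ depends on the codimension of $U_\circ$ in $W_\circ$, hence on $\dim(U_\circ)$ itself. Carrying the count out gives $|H_\circ|=168\cdot\prod_{k=0}^{s-7}(2^{s-3}-2^{3+k})\cdot 2^{3(s-3)}$ in the first case and $24\cdot\prod_{k=0}^{s-6}(2^{s-3}-2^{2+k})\cdot 2^{3(s-3)}$ in the second, whose ratio is $(2^{s-3}-4)/7$; this equals $4/7$ only for $s=6$, but it is never $1$, so the two cardinalities always differ and your conclusion stands. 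With that correction the argument is complete.
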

\begin{proof}
	Apply Lemma \ref{lem:coniugio_ho} to Theorem \ref{thm:coniugio_n3}.\qed
\end{proof}
Comparing Theorem \ref{thm:coniugio_n3} with its analogous Theorem \ref{thm:coniugio_n2} highlights the increased complexity when transitioning to the case where  $d = s-3$. Higher values of  $d$  appear to be even more intricate and are not yet fully understood. We should keep this in mind if we want to  use such sums in a parallel setting; for example, we can no longer assume that the same alternative operation can be applied to all components. Consequently, the techniques employed to prove Theorem \ref{thm:caratt_h0_msomme} become invalid. This, along with the considerations discussed in Section \ref{sec:aut}, clarifies our assumption of  $d = s-2$  for our applications.

\subsection{Considerations on 8-bit s-boxes and the case $d=s-3$}

Our analysis naturally extends beyond the 4-bit setting. In particular, when considering 8-bit s-boxes, experiments show that improvements can indeed be observed, especially for small values of the parameter $2 \leq d \leq 6$. This is expected, as $d$ measures the extent to which the alternative operation resembles the classical xor. For larger s-boxes, however, the phenomenon does not reach the extremal levels observed in dimension four, since constant differentials arise only under very specific algebraic coincidences that are unlikely to occur in higher dimensions. Nevertheless, the evidence shows that the study of alternative operations for larger s-boxes is meaningful and can reveal structural weaknesses that remain invisible to classical xor-based analysis.

Extending our analysis to all 8-bit permutations is not feasible: firstly, because the number of such permutations is extremely large, and secondly, because the number of possible alternative operations is also very high (roughly $2^{47}$~\cite{civino2025classificationsmallbinarybibraces}). Nonetheless, we can analyse  some common s-boxes for the cases $d=5$ and $d=6$ and show that, as expected, the $d=5$ case exhibits better performance in terms of $\circ$-nonlinearity gain (the operation is more dissimilar from xor than in the case $d=6$). We tested all alternative operations in canonical form, that is with $W_\circ$ generated by the last $d$ canonical vectors, (63 for $d=6$ and 32,550 for $d=5$) on the AES~\cite{daemen2002design}, Camellia~\cite{aoki2000camellia}, and Kuznyechik~\cite{biryukov2016reverse} s-boxes, and the results are summarized in Tab.~\ref{tab:common-d6} and Tab.~\ref{tab:common-d5}. These data show that in the $s-3$ setting there is a significant improvement compared to the case $s-2$, justifying a detailed analysis of the $s-3$ case. Although absent in 4-bit instances, it becomes highly relevant for 8-bit s-boxes and exhibits behaviours that deserve attention in their own right. Compared to the $s-2$ case, where the distribution of differential uniformities appears more concentrated, in the $s-3$ case a greater irregularity emerges. This difference can be traced back to the richer structure of the error space $U_\circ$ in the $s-3$ situation, as discussed previously.\\ 

\begin{table}[h!]
\centering
\begin{tabular}{c||c|c|c|}
s-box $\downarrow$ // $\delta^\circ \rightarrow$ & 8 & 10 & 12 \\
\hline\hline
AES        & 55 & 8  & $\cdot$ \\
\hline
Camellia   & 59 & 4  & $\cdot$ \\
\hline
Kuznyechik &   $\cdot$ & 54 & 9 \\
\hline
\end{tabular}
\vspace{2mm} 
\label{tab:common-d6}
\caption{Distribution of $\delta^\circ$-differential uniformities for $d=6$ on selected 8-bit s-boxes.}
\end{table}

\begin{table}[h!]
\centering
\begin{tabular}{c||c|c|c|c|c|c|}
s-box $\downarrow$ // $\delta^\circ \rightarrow$ & 8 & 10 & 12 & 14 & 16 & 18 \\
\hline\hline
AES        & 433 & 23858 & 7841 & 402 & 14 & 2 \\
\hline
Camellia   & 470 & 24087 & 7494 & 476 & 22 & 1 \\
\hline
Kuznyechik & 18 & 18940 & 12425 & 1086 & 80 & 1 \\
\hline
\end{tabular}
\vspace{2mm} 
\label{tab:common-d5}
\caption{Distribution of $\delta^\circ$-differential uniformities for $d=5$ on selected 8-bit s-boxes.}
\end{table}

For the other cases $d=2,3,4$, we have tested 100,000 randomly selected alternative operations on the AES s-box. We remark that, due to the vast number of possible operations, the likelihood of identifying an operation that significantly outperforms the others, if any, is extremely low. The corresponding results are summarized in Tab.~\ref{tab:random-d}.  

\begin{table}[h!]
\centering
\begin{tabular}{c||c|c|c|c|c|c|c|c|}
$d \downarrow$ // $\delta^\circ \rightarrow$ & 8 & 10 & 12 & 14 & 16 & 18 & 20 & 24 \\
\hline\hline
4 & 58 & 54528 & 41515 & 3604 & 281 & 12 & 2 & $\cdot$ \\
\hline
3 & 32 & 52976 & 43111 & 3654 & 216 & 11 & $\cdot$ & $\cdot$ \\
\hline
2 & 267 & 68148 & 29342 & 2073 & 149 & 20 & $\cdot$ & 1 \\
\hline
\end{tabular}
\vspace{2mm} 
\label{tab:random-d}
\caption{Distribution of $\delta^\circ$-differential uniformities for randomly selected operations on the AES s-box for $d=2,3,4$.}
\end{table}

\end{document}